\newtheorem{assumption}{Assumption}
\newtheorem{remark}{Remark}
\newtheorem{lemma}{Lemma}
\newtheorem{theorem}{Theorem}
\begin{document}

\title{Observer-Based Data-Driven Consensus Control for Nonlinear Multi-Agent Systems against DoS and FDI attacks}
\author{Yi Zhang, \IEEEmembership{Student Member, IEEE}, Bin Lei, Mohamadamin Rajabinezhad, \IEEEmembership{Student Member, IEEE}, Caiwen Ding, and Shan Zuo, \IEEEmembership{Member, IEEE}
\thanks{Yi Zhang, Mohamadamin Rajabinezhad, and Shan Zuo are with the
Department of Electrical and Computer Engineering, University of Connecticut, Storrs, CT 06269, USA. Bin Lei and Caiwen Ding are with the
Department of Computer Science and Engineering, University of Minnesota Twin Cities, Minneapolis, MN 55455, USA.
(Emails:yi.2.zhang@uconn.edu;lei00126@umn.edu;mohamadamin.
rajabinezhad@uconn.edu;dingc@umn.edu;shan.zuo@uconn.edu.)}}

\markboth{Journal of \LaTeX\ Class Files,~Vol.~14, No.~8, August~2021}%
{Shell \MakeLowercase{\textit{et al.}}: A Sample Article Using IEEEtran.cls for IEEE Journals}

\IEEEpubid{0000--0000/00\$00.00~\copyright~2021 IEEE}

\maketitle

\begin{abstract}
Existing data-driven control methods generally do not address False Data Injection (FDI) and Denial-of-Service (DoS) attacks simultaneously. This letter introduces a distributed data-driven attack-resilient consensus problem under both FDI and DoS attacks and proposes a data-driven consensus control framework, consisting of a group of comprehensive attack-resilient observers. The proposed group of observers is designed to estimate FDI attacks, external disturbances, and lumped disturbances, combined with a DoS attack compensation mechanism. A rigorous stability analysis of the approach is provided to ensure the boundedness of the distributed neighborhood estimation consensus error. 
The effectiveness of the approach is validated through numerical examples involving both leaderless consensus and leader-follower consensus, demonstrating significantly improved resilient performance compared to existing data-driven control approaches.
\end{abstract}

\begin{IEEEkeywords}
Multi-Agent Systems, Attack-Resilient Control, Data-Driven Control, FDI Attacks, DoS Attacks, Leader-Follower Consensus
\end{IEEEkeywords}

\section{Introduction}
\label{sec:introduction}
\IEEEPARstart{T}{he} FDI attacks and DoS attacks pose significant threats to Multi-Agent Systems (MAS), impacting critical infrastructures such as power systems, transportation networks, and military operations \cite{weng2023secure,zhang2019new}. These attacks exploit the MAS's heavy reliance on communication networks, potentially leading to severe economic losses or societal disruptions \cite{zhang2019new}. Recent research has increasingly focused on resilient control strategies to counter FDI and DoS attacks, which often depend on precise system models. However, generating such models is challenging, as it requires significant time and effort. This difficulty is compounded by unmodeled dynamics and limited robustness. Consequently, there is a growing shift toward data-driven control approaches that eliminate the need for detailed system modeling. Model-Free Adaptive Control (MFAC) has emerged as a promising solution for managing nonlinear MAS with unknown dynamics \cite{hou2010novel}. This approach addresses complex control challenges without requiring detailed knowledge of system dynamics. For instance, \cite{ma2021distributed} proposes a learning-based distributed MFAC algorithm to address nonlinear MAS under DoS attacks. However, this method only addresses DoS attacks and does not handle the common FDI attacks. Similarly, \cite{zhu2022model} introduces an MFAC design for nonlinear MAS under FDI attacks, but the approach relies on an attack detection mechanism. Given that FDI attacks are often stealthy by design, effective detection remains a significant challenge. 

To address these limitations, this letter proposes an observer-based MFAC approach that simultaneously mitigates the adverse effects of both DoS and FDI attacks. The contributions are as follows. 1) A group of attack-resilient observers is theoretically derived using the discrete-time compact-form dynamic linearization (DCFDL) method to construct the attack compensation signal for the distributed neighborhood error, based on the additional DCFDL model of the nonlinear system. 2) Compared to existing MFAC-based attack-resilient controller, which only addresses DoS attacks or FDI attacks, the proposed group of attack-resilient observers can address  both types of attacks simultaneously without relying on detection mechanisms, thus achieving enhanced attack-resilient performance. 3) A rigorous mathematical proof is provided, certifying that the neighborhood consensus estimation error converges to a bound determined by the upper bounds of the lumped disturbance and parameter estimation error, using the contraction mapping principle. That is, the proposed observer-based MFAC scheme guarantees the bounded stability of the nonlinear MAS.

\noindent\textit{Notations:} In this paper, $\left\|X^{p}\right\|$ and $\left\|X^{p \times q}\right\|$ represent the Euclidean norm and 2-norm, respectively. $\Delta$ is the backward difference operator, that is, $\Delta x(k+1)=x(k+1)-x(k)$. $\otimes$ denotes the Kronecker product. $\bar{b}_{(\cdot)}$ denotes an unknown positive constant representing the bound of the variable $(\cdot)$. $\mathbb{E}(\cdot)$ denotes the expectation of a variable $(\cdot)$. \( \operatorname{sgn} \) denotes the sign function. $\mathbf{1}_N\in\mathbb{R}^N$ is a vector with all entries are one. ${I}$ denotes identity matrix with the appropriate dimensions. For a matrix $A\in \mathbb{R}^{n \times n}$, $A_{i,pq}$ denotes its $p$-th row and $q$-th column element.  $\|\cdot\|_{d}$ denotes an induced matrix norm satisfies $\|A x\| \leq \|A\|_{d}\|x\|$.  $\rho(A)=\mathrm{max}\{|z_1|,\cdots,|z_n|\}$ denotes its spectral radius, with eigenvalue $z_i,i=1,\cdots,n$. A MAS consisting of one leader and $N$ followers is considered, where the interactions among them are represented by a signed digraph $\mathscr{G}=(\mathscr{V}, \mathscr{E}, \mathcal{A})$, where $\mathscr{V}=\{0,1,2, \ldots, N\}$ is the set of vertices, $0$ denotes the leader, and $1\cdots N $ denote the followers. $\mathscr{E} \subset \mathscr{V} \times \mathscr{V}$ denotes the set of edges, and $\mathcal{A}=[a_{ij}]\in\mathbb{R}^{N\times N}$ is the associated adjacency matrix where $a_{ij} \neq 0$ if $(j, i) \in \mathscr{E}$. The neighborhood of the agent $i$ is $\mathcal{N}_i=\{j\in\mathscr{V}:(j,i)\in\mathscr{E}\}$ and the self-edge $(i,i)$ satisfies $(i,i)\notin\mathscr{E}$. The in-degree matrix is defined as $\mathcal{D}=\operatorname{diag}\left(\boldsymbol{d}_i\right)$ with $\boldsymbol{d}_i=\sum\nolimits_{j \neq i}a_{ij}$. The Laplacian matrix ${\mathcal{L}}$ is defined as
$\mathcal{L} = \mathcal{D} - \mathcal{A}$.  where $g_i$ is the pinning gain from the leader to the follower $i$. If the $y_{d}(k)$ is available to agent $i$, it satisfies 
$g_i=
1$,
otherwise, $g_i=0$. The pinning gain matrix $\mathcal{G} = \operatorname{diag}(g_i)$.

\section{Problem Formulation}
\label{sec:Problem Formulation}
Consider the following discrete-time nonlinear system for the $N$ followers
\begin{equation}
\label{eq1}
\begin{aligned}
\boldsymbol{y}_i(k+1)= & f_i\left(\boldsymbol{y}_i(k), \boldsymbol{u}_i(k),\boldsymbol{d}_i(k)\right),\; i\in \mathscr{V}
\end{aligned}
\end{equation}where $\boldsymbol{y}_i(k)\in \mathbb{R}^{n_{\boldsymbol{y}_i}}, \boldsymbol{u}_i(k)\in \mathbb{R}^{n_{\boldsymbol{u}_i}},\boldsymbol{d}_i(k)\in \mathbb{R}^{n_{\boldsymbol{d}_i}}$ are the system output, control input, and external disturbance at time $k$, respectively; $n_{\boldsymbol{y}_i}, n_{\boldsymbol{u}_i},n_{\boldsymbol{d}_i}\in \mathbb{Z}_{+}$ are the unknown dimensions of $\boldsymbol{y}_i(k)$, $\boldsymbol{u}_i(k)$, and $\boldsymbol{d}_i(k)$; and $f_i(\cdot): \mathbb{R}^{n_{\boldsymbol{y}_i}+ n_{\boldsymbol{u}_i}+n_{\boldsymbol{d}_i}} \rightarrow \mathbb{R}^{n_{\boldsymbol{y}_i}}$ is an smooth nonlinear function. Four assumptions are given for the system \eqref{eq1}.
\begin{assumption}
\label{ass:1}
The directed topology graph $\mathscr{G}$ has a directed spanning tree.   
\end{assumption}
\begin{assumption}
\label{ass:2}
The system \eqref{eq1} is generalized Lipschitz, that is, $\|\Delta \boldsymbol{y}_i(k+1)\| \leq \bar{b}_{\|\Delta \boldsymbol{y}_i(k+1)\|/\|\Delta \boldsymbol{u}_i(k)\|}\|\Delta \boldsymbol{u}_i(k)\|$ for any $k$ and $\left\|\Delta \boldsymbol{u}_i(k)\right\| \neq 0,$ where $\bar{b}_{\|\Delta \boldsymbol{y}_i(k+1)\|/\|\Delta \boldsymbol{u}_i(k)\|}$ is a positive constant.
\end{assumption}
\begin{assumption}
\label{ass:3}
The external disturbance is bounded, that is, $\|\boldsymbol{d}_i(k)\| \leq \bar{b}_{\boldsymbol{d}_i(k)},$ where $\bar{b}_{\boldsymbol{d}_i(k)}$ is a positive constant.
\end{assumption}
\begin{assumption}
\label{ass:4}
For any $k$, $\frac{\left\|\Delta \boldsymbol{u}_j(k)\right\|}{\left\|\Delta \boldsymbol{u}_i(k)\right\|}<\bar{b}_{\left\|\Delta \boldsymbol{u}_j(k)\right\|/\left\|\Delta \boldsymbol{u}_i(k)\right\|},$ $i, j \in \mathscr{V}, $ where $\bar{b}_{\left\|\Delta \boldsymbol{u}_j(k)\right\|/\left\|\Delta \boldsymbol{u}_i(k)\right\|}$ is a positive constant.
\end{assumption}

\begin{remark}
\label{rem:1}
These assumptions are standard in the literature. Assumption \ref{ass:1} is a  necessary condition to achieve leader-follower consensus. Assumption \ref{ass:2} indicates limited change of control input $\boldsymbol{u}_i(k)$ will not cause unlimited change of system output $\boldsymbol{y}_i(k+1)$. This means that the system energy is finite, and most practical systems satisfy this assumption. For Assumption \ref{ass:3}, the external disturbances on practical systems are usually bounded. Assumption \ref{ass:4} posits that controllers stabilize the system state, with input variations as energy. Unbounded input ratios may cause energy imbalances, risking instability \cite{zhang2024resilient}.
\end{remark}

The following technical results is needed.
\begin{lemma}[\cite{phanomchoeng2010bounded} Mean Value Theorem for a Vector Function] 
\label{lem:1}
Let the canonical basis of the vectorial space \( \mathbb{R}^p \) for all \( p \geq 1 \) be defined by $E_p = \{ e_p(i) \mid e_p(i) = (0, \dots, 0, \underset{i}{1}, 0, \dots, 0)^{\mathrm{T}}, i = 1, \dots, p \};$
and $f(x): \mathbb{R}^p \to \mathbb{R}^p$ be a function continuous on $[a,b] \in \mathbb{R}^p$ and differentiable on the convex hull of the set $(a,b)$. For $y_1, y_2 \in [a, b]$, there exist $\varrho_{ij,x}^{f,\text{max}}$ and $\varrho_{ij,x}^{f,\text{min}}$ for $i = 1, \dots, p$ and $j = 1, \dots, p$ such that $f(y_2) - f(y_1) 
= \Xi_{f}^{\Delta y}(y_2 - y_1),$ where $\Xi_{f}^{\Delta y}=\sum_{i=1}^p \sum_{j=1}^p e_p(i) e_p^{\mathrm{T}}(j) \phi_{ij,x}^{f,\text{max}} \varrho_{ij,x}^{f,\text{max}}
+\sum_{i=1}^p \sum_{j=1}^p e_p(i) e_p^{\mathrm{T}}(j) \phi_{ij,x}^{f,\text{min}} \varrho_{ij,x}^{f,\text{min}}$; $\varrho_{ij,x}^{f,\text{max}}, \varrho_{ij,x}^{f,\text{min}} \leq 0$, $\varrho_{ij,x}^{f,\text{max}} + \varrho_{ij,x}^{f,\text{min}} = 1$, $\phi_{ij,x}^{f,\text{max}} \geq \max\left(\frac{\partial f_i}{\partial x_j}\right)$, and $\phi_{ij,x}^{f,\text{min}} \leq \min\left(\frac{\partial f_i}{\partial x_j}\right)$,
for all $x \in (a, b)$.
\end{lemma}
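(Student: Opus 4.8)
The plan is to reduce the vector identity to $p$ scalar applications of the classical mean value theorem along the segment joining $y_1$ and $y_2$, and then to rewrite the resulting partial-derivative coefficients as the prescribed convex combinations of the given bounds $\phi^{f,\text{max}}_{ij,x}$ and $\phi^{f,\text{min}}_{ij,x}$ (reading the weights $\varrho$ as nonnegative, the displayed ``$\le 0$'' being an evident typo for ``$\ge 0$'', consistent with $\varrho^{f,\text{max}}_{ij,x}+\varrho^{f,\text{min}}_{ij,x}=1$).

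\emph{Step 1 (scalarization).} Fix $y_1,y_2\in[a,b]$; by hypothesis the segment $S=\{(1-t)y_1+ty_2:t\in[0,1]\}$ lies in the convex hull of $(a,b)$, on which $f$ is differentiable, while $f$ is continuous on $[a,b]\supseteq S$. Hence each scalar map $g_i(t)=f_i((1-t)y_1+ty_2)$, $i=1,\dots,p$, is continuous on $[0,1]$ and differentiable on $(0,1)$, so by the scalar mean value theorem there is $\tau_i\in(0,1)$ with $f_i(y_2)-f_i(y_1)=g_i'(\tau_i)=\sum_{j=1}^p\frac{\partial f_i}{\partial x_j}(\xi_i)\,(y_{2,j}-y_{1,j})$, where $\xi_i=(1-\tau_i)y_1+\tau_i y_2\in S$. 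Stacking these $p$ identities gives $f(y_2)-f(y_1)=\Gamma\,(y_2-y_1)$ with the $p\times p$ matrix $\Gamma=\sum_{i=1}^p\sum_{j=1}^p e_p(i)e_p^{\mathrm{T}}(j)\,\bar\gamma_{ij}$ and $\bar\gamma_{ij}:=\frac{\partial f_i}{\partial x_j}(\xi_i)$.

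\emph{Step 2 (convex representation of the entries).} Because $\xi_i\in S$ is an admissible argument, $\phi^{f,\text{min}}_{ij,x}\le\min\frac{\partial f_i}{\partial x_j}\le\bar\gamma_{ij}\le\max\frac{\partial f_i}{\partial x_j}\le\phi^{f,\text{max}}_{ij,x}$, so $\bar\gamma_{ij}\in[\phi^{f,\text{min}}_{ij,x},\phi^{f,\text{max}}_{ij,x}]$. If $\phi^{f,\text{max}}_{ij,x}>\phi^{f,\text{min}}_{ij,x}$, set $\varrho^{f,\text{max}}_{ij,x}=(\bar\gamma_{ij}-\phi^{f,\text{min}}_{ij,x})/(\phi^{f,\text{max}}_{ij,x}-\phi^{f,\text{min}}_{ij,x})$ and $\varrho^{f,\text{min}}_{ij,x}=1-\varrho^{f,\text{max}}_{ij,x}$; if the two bounds coincide, take $\varrho^{f,\text{max}}_{ij,x}=\varrho^{f,\text{min}}_{ij,x}=\tfrac12$. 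In both cases the weights are nonnegative, sum to one, and satisfy $\bar\gamma_{ij}=\phi^{f,\text{max}}_{ij,x}\varrho^{f,\text{max}}_{ij,x}+\phi^{f,\text{min}}_{ij,x}\varrho^{f,\text{min}}_{ij,x}$. Substituting this into $\Gamma$ and splitting the double sum turns $\Gamma$ into exactly $\Xi_f^{\Delta y}$ as defined in the statement, which proves $f(y_2)-f(y_1)=\Xi_f^{\Delta y}(y_2-y_1)$.

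\emph{Main obstacle.} The only delicate point is guaranteeing that every mean-value point $\xi_i$ lies in the region where $f$ is differentiable \emph{and} where the bounds $\phi^{f,\text{min}}_{ij,x},\phi^{f,\text{max}}_{ij,x}$ are claimed to hold; this is precisely what the ``differentiable on the convex hull of $(a,b)$'' hypothesis provides, since $[y_1,y_2]$—hence each $\xi_i$—lies in that hull. A minor subtlety is the degenerate case $\phi^{f,\text{max}}_{ij,x}=\phi^{f,\text{min}}_{ij,x}$, where the convex weights are non-unique; this is why the lemma only asserts \emph{existence} of the $\varrho$'s, and any fixed choice works. The remainder is bookkeeping with the outer-product expansion $\sum_{i,j}e_p(i)e_p^{\mathrm{T}}(j)(\cdot)_{ij}$, which merely reassembles a $p\times p$ matrix from its entries.
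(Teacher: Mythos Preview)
Your proof is correct and follows the standard route: componentwise application of the scalar mean value theorem along the segment $[y_1,y_2]$, followed by expressing each resulting partial-derivative value $\partial f_i/\partial x_j(\xi_i)$ as a convex combination of the prescribed bounds $\phi^{f,\min}_{ij,x}$ and $\phi^{f,\max}_{ij,x}$. Your observation that the ``$\le 0$'' in the statement should read ``$\ge 0$'' is also correct---nonnegative weights summing to one is the only way the convex-combination reading makes sense.

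Note, however, that the paper does not supply its own proof of this lemma: it is quoted verbatim from the cited reference and used as a black box in the proofs of Lemma~\ref{thm:1} and Theorem~\ref{thm:2}. So there is no ``paper's proof'' to compare against; your argument is essentially the one that appears in the original source, and it is the natural one.
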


\begin{lemma}
\label{thm:1}
Consider the discrete-time nonlinear system \eqref{eq1} satisfying Assumptions \ref{ass:1}-\ref{ass:4}. Then, system \eqref{eq1} can be equivalently transformed as the following DCFDL model
\begin{equation}
\label{eq2}
\Delta \boldsymbol{y}_i(k+1)=\Theta_i(k) \Delta \boldsymbol{u}_i(k)+\boldsymbol{\tau}_i(k)
\end{equation}
where $\Theta_i(k)$ is the pseudo-partitioned Jacobian matrix (PPJM) of $f_i(\cdot)$ in relation to $\boldsymbol{u}_i(k)$, which satisfies $\|\Theta_i(k)\|\leq\bar{b}_{\Theta_i(k)}$, where $\bar{b}_{\Theta_i(k)}$ is a positive constant; and $\boldsymbol{\tau}_i(k)$ is the residual nonlinear lumped disturbance.
\end{lemma}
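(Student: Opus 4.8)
## Proof Proposal for Lemma 2 (DCFDL Model Derivation)

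The plan is to construct the DCFDL model in \eqref{eq2} by applying the Mean Value Theorem for vector functions (Lemma \ref{lem:1}) to the nonlinear map $f_i(\cdot)$ and then absorbing all residual terms into the lumped disturbance $\boldsymbol{\tau}_i(k)$. First I would write $\Delta \boldsymbol{y}_i(k+1) = f_i(\boldsymbol{y}_i(k), \boldsymbol{u}_i(k), \boldsymbol{d}_i(k)) - f_i(\boldsymbol{y}_i(k-1), \boldsymbol{u}_i(k-1), \boldsymbol{d}_i(k-1))$ and split this difference by introducing intermediate points so that only one argument changes at a time: an increment in $\boldsymbol{u}_i$ (holding $\boldsymbol{y}_i$ at time $k-1$ and $\boldsymbol{d}_i$ at time $k-1$), an increment in $\boldsymbol{y}_i$, and an increment in $\boldsymbol{d}_i$. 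Applying Lemma \ref{lem:1} to each piece yields $\Delta \boldsymbol{y}_i(k+1) = \Phi_i^{\boldsymbol{u}}(k)\Delta \boldsymbol{u}_i(k) + \Phi_i^{\boldsymbol{y}}(k)\Delta \boldsymbol{y}_i(k) + \Phi_i^{\boldsymbol{d}}(k)\Delta \boldsymbol{d}_i(k)$, where each $\Phi_i^{(\cdot)}(k)$ is the (bounded, since $f_i$ is smooth on a compact set) matrix of convex combinations of partial-derivative bounds supplied by the Mean Value Theorem.

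Next I would define $\Theta_i(k) := \Phi_i^{\boldsymbol{u}}(k)$ as the PPJM and collect the remaining terms into $\boldsymbol{\tau}_i(k) := \Phi_i^{\boldsymbol{y}}(k)\Delta \boldsymbol{y}_i(k) + \Phi_i^{\boldsymbol{d}}(k)\Delta \boldsymbol{d}_i(k)$, which gives exactly \eqref{eq2}. To establish $\|\Theta_i(k)\| \le \bar{b}_{\Theta_i(k)}$, I would invoke the smoothness of $f_i$ together with the boundedness of the operating region implied by Assumptions \ref{ass:2}--\ref{ass:4}: the entries of $\Theta_i(k)$ are convex combinations of $\phi_{ij,x}^{f,\max}$ and $\phi_{ij,x}^{f,\min}$, which are finite bounds on $\partial f_i / \partial u_{i,j}$, hence the induced norm is bounded by a constant. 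The residual $\boldsymbol{\tau}_i(k)$ is finite for each $k$ because $\Delta \boldsymbol{y}_i(k)$ is bounded (Assumption \ref{ass:2} with bounded input increments) and $\Delta \boldsymbol{d}_i(k)$ is bounded (Assumption \ref{ass:3}).

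The one subtlety — and the step I expect to require the most care — is justifying that the decomposition can be made to collapse into the \emph{single-input} form $\Theta_i(k)\Delta \boldsymbol{u}_i(k) + \boldsymbol{\tau}_i(k)$ with $\Theta_i(k)$ genuinely playing the role of a pseudo-Jacobian with respect to $\boldsymbol{u}_i(k)$ only, rather than a full Jacobian in all variables. This is handled exactly as in the standard CFDL/MFAC derivation: one shows that the term $\Phi_i^{\boldsymbol{y}}(k)\Delta \boldsymbol{y}_i(k) + \Phi_i^{\boldsymbol{d}}(k)\Delta \boldsymbol{d}_i(k)$, viewed as a function of $\Delta \boldsymbol{u}_i(k)$, admits a representation $\boldsymbol{\eta}_i(k)\Delta \boldsymbol{u}_i(k)$ for some bounded $\boldsymbol{\eta}_i(k)$ whenever $\|\Delta \boldsymbol{u}_i(k)\| \ne 0$ (this is where Assumption \ref{ass:2}, the generalized Lipschitz condition, and Assumption \ref{ass:4} enter to keep the ratio bounded), and then sets $\Theta_i(k) = \Phi_i^{\boldsymbol{u}}(k) + \boldsymbol{\eta}_i(k)$; when $\|\Delta \boldsymbol{u}_i(k)\| = 0$ one takes $\Theta_i(k)$ to be any matrix of the appropriate dimension satisfying the bound, and the identity \eqref{eq2} holds trivially with $\boldsymbol{\tau}_i(k) = \Delta \boldsymbol{y}_i(k+1)$. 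Combining both cases, the uniform bound $\bar{b}_{\Theta_i(k)}$ follows, completing the proof.
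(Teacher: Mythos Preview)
Your first two paragraphs match the paper's proof almost exactly: telescope $\Delta \boldsymbol{y}_i(k+1)$ by inserting intermediate evaluations of $f_i$ so that only one argument changes at a time, apply Lemma~\ref{lem:1} to each piece, and identify $\Theta_i(k)$ with the $\boldsymbol{u}_i$-block while pushing the $\boldsymbol{y}_i$- and $\boldsymbol{d}_i$-blocks into $\boldsymbol{\tau}_i(k)$. The only cosmetic differences are that the paper freezes $\boldsymbol{y}_i$ and $\boldsymbol{d}_i$ at their time-$k$ values when isolating the $\boldsymbol{u}$-increment (you suggest time-$(k{-}1)$), and the paper bounds $\|\Theta_i(k)\|$ by invoking Assumption~\ref{ass:2} directly rather than by bounding the MVT entries.

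Your third paragraph, however, is unnecessary and reflects a slight misreading of the target model. In the classical CFDL of Hou et al.\ there is \emph{no} residual term, so everything must be absorbed into the PPJM via the generalized Lipschitz condition; that is the ``subtlety'' you are anticipating. But here the lemma explicitly retains a separate lumped disturbance $\boldsymbol{\tau}_i(k)$ in~\eqref{eq2}, so the decomposition $\Theta_i(k)\Delta\boldsymbol{u}_i(k)+\boldsymbol{\tau}_i(k)$ with $\Theta_i(k)=\Phi_i^{\boldsymbol{u}}(k)$ is already the end of the argument---there is nothing to ``collapse,'' and no case split on $\|\Delta\boldsymbol{u}_i(k)\|=0$ is needed. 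The paper's proof stops exactly where your second paragraph stops.
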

\begin{proof}
Based on Lemma \ref{lem:1}, by using \eqref{eq1}, one has 
\begin{equation*}
\begin{aligned}
&\Delta \boldsymbol{y}_i(k+1)
\\=&    f_i\left(\boldsymbol{y}_i(k), \boldsymbol{u}_i(k),\boldsymbol{d}_i(k)\right) 
 -f_i\left(\boldsymbol{y}_i(k), \boldsymbol{u}_i(k-1),\boldsymbol{d}_i(k)\right) 
\\&+f_i\left(\boldsymbol{y}_i(k), \boldsymbol{u}_i(k-1),\boldsymbol{d}_i(k)\right)-f_i\left(\boldsymbol{y}_i(k-1),\cdots\right.
\\& \left.\boldsymbol{u}_i(k-
1),\boldsymbol{d}_i(k)\right)
+f_i\left(\boldsymbol{y}_i(k-1), \boldsymbol{u}_i(k-
1),\boldsymbol{d}_i(k)\right)
\\&-f_i\left(\boldsymbol{y}_i(k-1), \boldsymbol{u}_i(k-1),\boldsymbol{d}_i(k-1)\right)
\\=&\Theta_i(k) \Delta \boldsymbol{u}_i(k)+\boldsymbol{\tau}_i(k)
\end{aligned}
\end{equation*}
where $\Theta_i(k)=
\sum_{m=1}^p \sum_{n=1}^p e_p(m) e_p^{\mathrm{T}}(n) \phi_{mn,\boldsymbol{u}_i}^{f_i,\text{max}} \varrho_{mn,\boldsymbol{u}_i}^{f_i,\text{max}}
+\sum_{m=1}^p \sum_{n=1}^p e_p(m) e_p^{\mathrm{T}}(n) \phi_{mn,\boldsymbol{u}_i}^{f_i,\text{min}} \varrho_{mn,\boldsymbol{u}_i}^{f_i,\text{min}}$; and $\boldsymbol{\tau}_i(k)=(
\sum_{m=1}^p \sum_{n=1}^p e_p(m) e_p^{\mathrm{T}}(n) \phi_{mn,\boldsymbol{y}_i}^{f_i,\text{max}} \varrho_{mn,\boldsymbol{y}_i}^{f_i,\text{max}}
+\sum_{m=1}^p \sum_{n=1}^p e_p(m) e_p^{\mathrm{T}}(n) \phi_{mn,\boldsymbol{y}_i}^{f_i,\text{min}} \varrho_{mn,\boldsymbol{y}_i}^{f_i,\text{min}} )\Delta \boldsymbol{y}_i(k)
+(\sum_{m=1}^p \sum_{n=1}^p e_p(m) e_p^{\mathrm{T}}(n) \phi_{mn,\boldsymbol{d}_i}^{f_i,\text{max}} \varrho_{mn,\boldsymbol{d}_i}^{f_i,\text{max}}
+\sum_{m=1}^p \sum_{n=1}^p e_p(m) e_p^{\mathrm{T}}(n) \phi_{mn,\boldsymbol{d}_i}^{f_i,\text{min}} \varrho_{mn,\boldsymbol{d}_i}^{f_i,\text{min}})\Delta \boldsymbol{d}_i(k).$ Based on Assumption \ref{ass:2}, $f_i(\cdot)$ is generalized Lipschitz, that is, there exists $\bar{b}_{\Theta_i}$ such that $\|\Delta \boldsymbol{y}_i(k+1)\|\leq\bar{b}_{\Theta_i} \|\Delta \boldsymbol{u}_i(k)\|.$ Hence, $\|\Theta_i(k)\|\leq\bar{b}_{\Theta_i}.$ This completes the proof.
\end{proof}  

The following objective is considered in this paper.

\noindent\textbf{Objective} (Leader-Follower Consensus Objective)\textbf{.} \textit{Consider the followers’ dynamics \eqref{eq1} and a desired trajectory $y_{0}(k)$, to which only a portion of the followers have access, the leader-follower consensus objective is to ensure that the outputs of the followers satisfy $\lim _{k \rightarrow \infty} y_{i}(k)=y_{0}(k)$.}    

To this end, define the neighborhood consensus error as
\begin{flalign}
\label{eq3}
\xi_i(k)=\sum\nolimits_{j \in N_{i}} a_{i j}\left(y_{i}(k)-\boldsymbol{y}_j(k)\right)+g_i\left(y_{0}(k)-y_{i}(k)\right)
&&\raisetag{1\baselineskip}
\end{flalign}

The global form of \eqref{eq3} is $\xi(k) = (\mathcal{L}+\mathcal{G})\otimes I_N(\bar{y}_0(k)-y(k))$, where $\xi(k)=[\xi_1(k)^{\mathrm{T}},\cdots,\xi_N(k)^{\mathrm{T}}]^{\mathrm{T}}$, $\bar{y}_0(k)=y_0(k)\otimes \boldsymbol{1}_N$, and $y(k)=[y_1(k)^{\mathrm{T}},\cdots,y_N(k)^{\mathrm{T}}]^{\mathrm{T}}$. Hence, when $\lim_{k\to\infty}\xi_i(k)=0$, the objective is achieved.

In this paper, a persistent FDI attack combining with an aperiodic DoS attack is considered. The attacked output signal $\boldsymbol{y}_i^a(k)$ is modeled as $\boldsymbol{y}_i^a(k)=H_i(k)\left(\boldsymbol{y}_i(k)+\boldsymbol{\delta}_i(k,\boldsymbol{y}_i(k))\right)$, where $\boldsymbol{\delta}_i(k,\boldsymbol{y}_i(k))\in\mathbb{R}^{n_{\boldsymbol{y}_i}}$ denotes the FDI attack signal injected into the output signal transmitted by the measurement channel; and $H_i(k)=\operatorname{diag}\left(h_{i,m}(k)\right), m\in\{1,\cdots, n_{\boldsymbol{y}_i}\}$ is the DoS attacks coefficient, which satisfies $h_{i,m}(k)=1, k\in\left[T_{i,m,n-1}^{\text{OFF}}, T_{i,m,n}^{\text{ON}}\right)$; $0, k\in\left[T_{i,m,n}^{\text{ON}}, T_{i,m,n}^{\text{OFF}}\right)$; and $\left[T_{n}^{\mathrm{ON}}, T_{n}^{\mathrm{OFF}}\right)$ represents the $n$th DoS attacks interval. The start and end time instants are defined as $T_{n}^{\mathrm{ON}}$ and $T_{n}^{\mathrm{OFF}}$, respectively.

The following assumptions of the FDI and DoS attacks are considered.
\begin{assumption}
\label{ass:5}
 $\|\boldsymbol{\delta}_{i}(k, \boldsymbol{y}_i(k))\| \leq \bar{b}_{\|\boldsymbol{\delta}_{i}(k, \boldsymbol{y}_i(k))\|},$ where $\bar{b}_{\|\boldsymbol{\delta}_{i}(k, \boldsymbol{y}_i(k))\|}$ is a positive constant.
\end{assumption}
\begin{assumption}
[DoS Attack Frequency]
\label{ass:6}
For any $\left[k_{0}, k\right] \subset$ $[0, \infty)$, there must be positive constants $\kappa_{a}$ and $K_{\infty}$ function $f_{\kappa}\left(k_{0}, k\right)$ such that $n_{a}\left(k_{0}, k\right) \leq \kappa_{a}+f_{a}\left(k_{0}, k\right)$, where $n_{a}\left(k_{0}, k\right)$ is the total number of DoS attacks over $\left[k_{0}, k\right]$.    
\end{assumption}
\begin{assumption}[DoS Attack Duration]
\label{ass:7}
For any $\left[k_{0}, k\right] \subset$ $[0, \infty)$, there must be positive constants $\Xi_a$ and $K_{\infty}$ function $f_{\Xi}\left(k_{0}, k\right)$ such that $\Xi\left(k_{0}, k\right) \leq \Xi_a+f_{\Xi}\left(k_{0}, k\right)$, where $\Xi$ is the total duration time of DoS attacks over $\left[k_{0}, k\right]$.
\end{assumption}
\begin{remark}
\label{rem:2}
Considering that the attacker has knowledge of the security detection mechanism and can eavesdrop on the information disclosed on the channel, and it can use the eavesdropped information in the channel to reconstruct the attack signals. Due to the limited power of the attacker, the FDI attack signal generally has an upper bound, as shown in Assumption \ref{ass:5}. The rationality of the Assumption \ref{ass:6} and \ref{ass:7}
has been extensively discussed in \cite{de2015input}, and thus it is
omitted.
\end{remark}

To achieve the leader-follower consensus objective of the discrete-time nonlinear MAS, the attack-resilient distributed consensus problem is designed as follows.

\noindent\textbf{Problem} (Attack-Resilient Distributed Consensus Problem)\textbf{.}
\textit{Given Assumptions \ref{ass:1}-\ref{ass:7}, the attack-resilient distributed consensus problem is to develop an observer-based data-driven consensus control algorithm such that $\mathbb{E}\left(\left\|\xi_i(k)\right\|\right)$ is bounded, that is, $\lim_{k\to\infty}\mathbb{E}\left(\left\|\xi_i(k)\right\|\right)\leq \bar{b}_{\mathbb{E}\left(\left\|\xi_i(k)\right\|\right)},$ where $\bar{b}_{\mathbb{E}\left(\left\|\xi_i(k)\right\|\right)}$ is a positive constant.}

\section{Control System Design and Stability Analysis}
\label{sec:Control System Design and Stability Analysis}

To address the DoS attack, an attack compensation observer is designed as\begin{equation}
\label{eq4}
\boldsymbol{\chi}_i(k) = H_i(k)\xi_i(k)+(I-H_i(k))\xi_i(k-1)    
\end{equation}
\begin{theorem}
\label{thm:2}
Given Assumptions \ref{ass:1}-\ref{ass:7},  there exists a PPJM $\Phi_{i}(k)$ such that\begin{equation}
\label{eq5}
\Delta \boldsymbol{\chi}_{i}(k+1)=\Phi_{i}(k) \Delta \boldsymbol{u}_i(k)+\boldsymbol{\theta}_i(k)
\end{equation}
where $\boldsymbol{\theta}_i(k)$ is the residual nonlinear lumped disturbance, and $\Phi_{i}(k)$ is bounded such that $\left\|\Phi_{i}(k)\right\| \leq \bar{b}_{\Phi_i}$.    
\end{theorem}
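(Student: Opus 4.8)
The plan is to obtain \eqref{eq5} by differencing the DoS-compensation signal \eqref{eq4}, substituting the DCFDL model \eqref{eq2} of Lemma~\ref{thm:1} for each output increment, and then routing every term that is not proportional to $\Delta\boldsymbol{u}_i(k)$ into the residual lumped disturbance $\boldsymbol{\theta}_i(k)$.

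First I would compute $\Delta\boldsymbol{\chi}_i(k+1)$ from \eqref{eq4}. Writing $\boldsymbol{\chi}_i(k+1)=H_i(k+1)\xi_i(k+1)+(I-H_i(k+1))\xi_i(k)$, subtracting $\boldsymbol{\chi}_i(k)$, and adding and subtracting $H_i(k+1)\xi_i(k)$ yields the identity
\begin{equation*}
\Delta\boldsymbol{\chi}_i(k+1)=H_i(k+1)\,\Delta\xi_i(k+1)+(I-H_i(k))\,\Delta\xi_i(k).
\end{equation*}
Next, from \eqref{eq3}, $\Delta\xi_i(k+1)=\sum_{j\in\mathcal{N}_i}a_{ij}\big(\Delta\boldsymbol{y}_i(k+1)-\Delta\boldsymbol{y}_j(k+1)\big)+g_i\big(\Delta y_0(k+1)-\Delta\boldsymbol{y}_i(k+1)\big)$; substituting \eqref{eq2}, i.e.\ $\Delta\boldsymbol{y}_i(k+1)=\Theta_i(k)\Delta\boldsymbol{u}_i(k)+\boldsymbol{\tau}_i(k)$ and $\Delta\boldsymbol{y}_j(k+1)=\Theta_j(k)\Delta\boldsymbol{u}_j(k)+\boldsymbol{\tau}_j(k)$, expresses $\Delta\xi_i(k+1)$ as the sum of (i) a scalar multiple of $\Theta_i(k)$ (determined by $a_{ij}$ and $g_i$) acting on $\Delta\boldsymbol{u}_i(k)$; (ii) the neighbor cross terms $-\sum_{j\in\mathcal{N}_i}a_{ij}\Theta_j(k)\Delta\boldsymbol{u}_j(k)$; and (iii) a disturbance term $\sum_{j\in\mathcal{N}_i}a_{ij}\big(\boldsymbol{\tau}_i(k)-\boldsymbol{\tau}_j(k)\big)+g_i\big(\Delta y_0(k+1)-\boldsymbol{\tau}_i(k)\big)$. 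The lagged increment $\Delta\xi_i(k)$ is treated identically with $k$ replaced by $k-1$, so it depends only on $\Delta\boldsymbol{u}_i(k-1)$, the $\Delta\boldsymbol{u}_j(k-1)$, and past residuals.

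The key step is to turn the cross terms (ii) into something proportional to $\Delta\boldsymbol{u}_i(k)$. By Assumption~\ref{ass:4}, $\|\Delta\boldsymbol{u}_j(k)\|\le \bar{b}_{\|\Delta \boldsymbol{u}_j(k)\|/\|\Delta \boldsymbol{u}_i(k)\|}\,\|\Delta\boldsymbol{u}_i(k)\|$, so whenever $\|\Delta\boldsymbol{u}_i(k)\|\neq 0$ one may write $\Theta_j(k)\Delta\boldsymbol{u}_j(k)=\Lambda_{ij}(k)\Delta\boldsymbol{u}_i(k)$ with, for instance, $\Lambda_{ij}(k)=\Theta_j(k)\Delta\boldsymbol{u}_j(k)\,\Delta\boldsymbol{u}_i^{\mathrm{T}}(k)\,\|\Delta\boldsymbol{u}_i(k)\|^{-2}$, which, by $\|\Theta_j(k)\|\le\bar{b}_{\Theta_j}$ from Lemma~\ref{thm:1} and by Assumption~\ref{ass:4}, satisfies $\|\Lambda_{ij}(k)\|\le\bar{b}_{\Theta_j}\,\bar{b}_{\|\Delta \boldsymbol{u}_j(k)\|/\|\Delta \boldsymbol{u}_i(k)\|}$; in the degenerate case $\Delta\boldsymbol{u}_i(k)=0$ the cross terms are absorbed into $\boldsymbol{\theta}_i(k)$ and $\Phi_i(k)$ is kept at its previous value, as is standard in compact-form dynamic linearization. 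Collecting terms then gives \eqref{eq5} with $\Phi_i(k)=H_i(k+1)\big(c_i\Theta_i(k)-\sum_{j\in\mathcal{N}_i}a_{ij}\Lambda_{ij}(k)\big)$ for a graph-dependent scalar $c_i$, and with $\boldsymbol{\theta}_i(k)$ collecting $H_i(k+1)$ times the term (iii) together with the entire lagged contribution $(I-H_i(k))\Delta\xi_i(k)$ (which carries all the $\Delta\boldsymbol{u}(k-1)$ and past-disturbance terms).

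Finally, the claimed bound on $\Phi_i(k)$ follows from the triangle inequality using $\|H_i(k+1)\|\le 1$ (a $0/1$ diagonal matrix), $\|\Theta_i(k)\|\le\bar{b}_{\Theta_i}$, $\|\Lambda_{ij}(k)\|\le\bar{b}_{\Theta_j}\bar{b}_{\|\Delta \boldsymbol{u}_j(k)\|/\|\Delta \boldsymbol{u}_i(k)\|}$, the finiteness of $\mathcal{N}_i$, and the fixed weights $a_{ij},g_i$; hence $\|\Phi_i(k)\|\le\bar{b}_{\Phi_i}$ for a positive constant $\bar{b}_{\Phi_i}$. The step I expect to be the main obstacle is precisely the cross-neighbor reparametrization $\Theta_j(k)\Delta\boldsymbol{u}_j(k)=\Lambda_{ij}(k)\Delta\boldsymbol{u}_i(k)$ with a \emph{uniformly} bounded $\Lambda_{ij}(k)$, including the handling of the degenerate case $\Delta\boldsymbol{u}_i(k)=0$ --- this is where Assumption~\ref{ass:4} is essential; the remaining algebra (the identity for $\Delta\boldsymbol{\chi}_i(k+1)$ and the routing of leftover terms into $\boldsymbol{\theta}_i(k)$) is routine bookkeeping.
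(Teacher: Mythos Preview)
Your argument is correct, but it follows a different route from the paper's. The paper first writes $\boldsymbol{\chi}_i(k+1)=\bar f_i\big(\boldsymbol{y}_i(k),\boldsymbol{u}_i(k),\{(\boldsymbol{y}_j(k),\boldsymbol{u}_j(k))\}_{j\in\mathcal N_i}\big)+\zeta_i(k)$ with $\zeta_i$ the DoS hold term, and then \emph{re-applies} the vector mean value theorem (Lemma~\ref{lem:1}) to the composite function $\bar f_i$, exactly mimicking the proof of Lemma~\ref{thm:1}; this yields explicit partial-derivative formulas for $\Phi_i(k)$ and $\boldsymbol{\theta}_i(k)$, and in particular the neighbor increments $\Delta\boldsymbol{u}_j(k)$ are left inside the residual $\boldsymbol{\theta}_i(k)$ rather than absorbed into $\Phi_i(k)$. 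Boundedness of $\Phi_i(k)$ is then obtained by arguing that $\bar f_i$, being a linear combination of the $f_i$'s, inherits the generalized Lipschitz property of Assumption~\ref{ass:2}. By contrast, you never touch Lemma~\ref{lem:1} again: you difference $\boldsymbol{\chi}_i$ directly (your identity $\Delta\boldsymbol{\chi}_i(k+1)=H_i(k+1)\Delta\xi_i(k+1)+(I-H_i(k))\Delta\xi_i(k)$ is correct), plug in the already-established DCFDL model \eqref{eq2}, and use the rank-one reparametrization $\Lambda_{ij}(k)=\Theta_j(k)\Delta\boldsymbol{u}_j(k)\Delta\boldsymbol{u}_i^{\mathrm T}(k)/\|\Delta\boldsymbol{u}_i(k)\|^{2}$ together with Assumption~\ref{ass:4} to push the cross terms into $\Phi_i(k)$ with a uniform bound. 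Your approach is more elementary (no second application of the mean value theorem) and makes the role of Assumption~\ref{ass:4} explicit at the level of $\Phi_i(k)$; the paper's approach has the advantage of producing a $\Phi_i(k)$ that is a genuine pseudo-Jacobian of $\bar f_i$ with respect to $\boldsymbol{u}_i$ alone, which matches the PPJM terminology and feeds more naturally into the later estimator \eqref{eq10}. Both decompositions satisfy the theorem as stated, since \eqref{eq5} only asserts existence of some bounded $\Phi_i(k)$.
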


\begin{proof}
\label{prf:1}
Define $\zeta_i(k) = (I-H_i(k))\xi_i(k-1)$ and $\left\{\left(\boldsymbol{y}_j(k),\boldsymbol{u}_j(k)\right)\right\}_{j \in {\mathcal{N}_i}}\equiv\{\left(\boldsymbol{y}_j(k),\boldsymbol{u}_j(k))\mid j \in {\mathcal{N}_i}\right.\}.$ 

Due to the page limit, the calculation is omitted. Substituting \eqref{eq1} and  \eqref{eq3} into \eqref{eq5}, one has $\boldsymbol{\chi}_i(k+1)=\bar{f}_i\left(\boldsymbol{y}_i(k), \boldsymbol{u}_i(k), \{\left(\boldsymbol{y}_j(k),\boldsymbol{u}_j(k)\right)\right)\}_{j \in \mathcal{N}_i}) +\zeta_i(k)$, where $\bar{f}_i(\cdot)$ is a smooth nonlinear function, as $f_i(\cdot)$ is a smooth nonlinear function, and $\bar{f}_i(\cdot)$ is a linear combination of $f_i(\cdot)$. Then based on Lemmas \ref{lem:1} and \ref{thm:1}, following a process similar to the proof of Lemma \ref{thm:1}, one has $\Delta \boldsymbol{\chi}_{i}(k+1)=\Phi_{i}(k) \Delta \boldsymbol{u}_i(k)+\boldsymbol{\theta}_i(k)$, where $\Phi_{i}(k)=
\sum_{m=1}^p \sum_{n=1}^p e_p(m) e_p^{\mathrm{T}}(n) \phi_{mn,\boldsymbol{u}_i}^{\bar{f}_i,\text{max}} \varrho_{mn,\boldsymbol{u}_i}^{\bar{f}_i,\text{max}}
+\sum_{m=1}^p \sum_{n=1}^p e_p(m) e_p^{\mathrm{T}}(n) \phi_{mn,\boldsymbol{u}_i}^{\bar{f}_i,\text{min}} \varrho_{mn,\boldsymbol{u}_i}^{\bar{f}_i,\text{min}}  
+\sum_{m=1}^p \sum_{n=1}^p e_p(m) e_p^{\mathrm{T}}(n) \phi_{mn,\boldsymbol{y}_i}^{\bar{f}_i,\text{max}} \varrho_{mn,\boldsymbol{y}_i}^{\bar{f}_i,\text{max}}\Theta_i(k)
+\sum_{m=1}^p \sum_{n=1}^p e_p(m) e_p^{\mathrm{T}}(n) \phi_{mn,\boldsymbol{y}_i}^{\bar{f}_i,\text{min}} \varrho_{mn,\boldsymbol{y}_i}^{\bar{f}_i,\text{min}}\Theta_i(k)$; and $\boldsymbol{\theta}_i(k) =
\sum_{m=1}^p \sum_{n=1}^p e_p(m) e_p^{\mathrm{T}}(n) \phi_{mn,\boldsymbol{y}_i}^{\bar{f}_i,\text{max}} \varrho_{mn,\boldsymbol{y}_i}^{\bar{f}_i,\text{max}}\tau_i(k)
+\sum_{m=1}^p \sum_{n=1}^p e_p(m) e_p^{\mathrm{T}}(n) \phi_{mn,\boldsymbol{y}_i}^{\bar{f}_i,\text{min}} \varrho_{mn,\boldsymbol{y}_i}^{\bar{f}_i,\text{min}}\tau_i(k)+\sum_{m=1}^p \sum_{n=1}^p e_p(m) e_p^{\mathrm{T}}(n) \phi_{mn,\boldsymbol{y}_j}^{\bar{f}_i,\text{max}} \varrho_{mn,\boldsymbol{y}_j}^{\bar{f}_i,\text{max}}\Delta \boldsymbol{y}_j(k)
+\sum_{m=1}^p \sum_{n=1}^p e_p(m) e_p^{\mathrm{T}}(n) \phi_{mn,\boldsymbol{y}_j}^{\bar{f}_i,\text{min}} \varrho_{mn,\boldsymbol{y}_j}^{\bar{f}_i,\text{min}}\Delta \boldsymbol{y}_j(k)
+\sum_{m=1}^p \sum_{n=1}^p e_p(m) e_p^{\mathrm{T}}(n) \phi_{mn,\boldsymbol{u}_j}^{\bar{f}_i,\text{max}} \varrho_{mn,\boldsymbol{u}_j}^{\bar{f}_i,\text{max}}\Delta \boldsymbol{u}_j(k)
+\sum_{m=1}^p \sum_{n=1}^p e_p(m) e_p^{\mathrm{T}}(n) \phi_{mn,\boldsymbol{u}_j}^{\bar{f}_i,\text{min}} \varrho_{mn,\boldsymbol{u}_j}^{\bar{f}_i,\text{min}}\Delta \boldsymbol{u}_j(k)+\Delta\zeta_i(k).$ 

Since $\boldsymbol{\chi}_i(k+1)$ is a linear combination of unknown smooth nonlinear functions $ {f}_i(\cdot)$ and $ {f}_j(\cdot)$. These functions satisfy Assumption \ref{ass:2}, that is, they are all generalized Lipschitz. Hence, $\boldsymbol{\chi}_i(k+1)$ is also generalized Lipschitz, and consequently, $\Delta\boldsymbol{\chi}_i(k+1)$ is generalized Lipschitz as well. This implies that there exists a positive constant $\bar{b}_{\Phi_i(k)}$ such that $\left\|\Delta\boldsymbol{\chi}_i(k+1)\right\|\leq \bar{b}_{\Phi_i(k)}\left\|\Delta\boldsymbol{u}_i(k)\right\|+\left\|\boldsymbol{\theta}_i(k)\right\|.$ This completes the proof.
\end{proof}

For the DCFDL model in \eqref{eq5}, a disturbance observer should theoretically exist to estimate the lumped disturbance $\boldsymbol{\theta}_i(k)$, as various types of disturbance observers have been successfully applied and proven effective in practice. Therefore, with the design of an external disturbance observer and an FDI attack observer, an effective group of observers is designed as
\begin{align}
&\label{eq6}\begin{aligned}
\hat{\boldsymbol{\chi}}_i(k+1)= &\hat{\boldsymbol{\chi}}_i(k)+\left(\hat{\Phi}_i(k)\Delta \boldsymbol{u}_i(k)
+ \hat{\boldsymbol{\theta}}_i(k)\right)\\&+l_1\hat{\boldsymbol{d}}_i(k)+l_2\hat{\boldsymbol{\delta}}_{i}(k)+l_{3} \tilde{\boldsymbol{\chi}}_i(k)
\end{aligned}
\\\label{eq7}&\hat{\boldsymbol{\theta}}_i(k+1)=\hat{\boldsymbol{\theta}}_i(k)+l_4\hat{\boldsymbol{d}}_i(k)+l_5\hat{\boldsymbol{\delta}}_{i}+l_6 \tilde{\boldsymbol{\chi}}_i(k)
\\\label{eq8}&\hat{\boldsymbol{d}}_i(k+1)=\hat{\boldsymbol{d}}_{i}(k)+l_7\left(\hat{\Phi}_i(k)\Delta \boldsymbol{u}_i(k)+\hat{\boldsymbol{\theta}}_i(k)\right)
\\\label{eq9}&\hat{\boldsymbol{\delta}}_{i}(k+1)=\hat{\boldsymbol{\delta}}_{i}(k)+l_8\left(\hat{\Phi}_i(k)\Delta \boldsymbol{u}_i(k)+\hat{\boldsymbol{\theta}}_i(k)\right)
\end{align}
where $\tilde{\boldsymbol{\chi}}_i(k+1)=\boldsymbol{\chi}_{i}(k+1)-\hat{\boldsymbol{\chi}}_i(k+1)$.

Since the time-varying PPJM $\Phi_{i}(k)$ is difficult to obtain, a performance function is designed as $J_{1}[\Phi_{i}(k)]=\|\Delta \boldsymbol{\chi}_{i}(k)-\hat{\Phi}_i(k-1) \Delta \boldsymbol{u}_i(k-1)\|^{2}
+\rho_i\|\Phi_{i}(k)-\hat{\Phi}_{i}(k-1)\|^{2},$ where $\rho_i>0$ is a weighting factor  constraining the change rate
of PPJMs.
Based on it, $\hat{\Phi}_i(k)$ is designed as
\begin{equation}
\label{eq10}
\begin{aligned}
\hat{\Phi}_i(k)=&\hat{\Phi}_i(k-1) +\frac{\rho_i  \Delta\boldsymbol{\chi}_i(k)\Delta \boldsymbol{u}_i^{\mathrm{T}}(k-1)}{\lambda_i+\|\Delta \boldsymbol{u}_i(k-1)\|^2}
\\&-\frac{\rho_i \hat{\Phi}_i(k-1) \Delta \boldsymbol{u}_i(k-1)\Delta \boldsymbol{u}_i^{\mathrm{T}}(k-1)}{\lambda_i+\|\Delta \boldsymbol{u}_i(k-1)\|^2}    
\end{aligned}    
\end{equation}
where $\rho_i\in(0,1]$ is a step factor.

A resetting mechanism is designed to enhance stability as
\begin{equation}
\label{eq11}
\begin{aligned}
&\hat{\Phi}_{i}(0)=\hat{\Phi}_{i}(1),
\text { if }\|\hat{\Phi}_{i}(k)\|<\epsilon_1, 
\text { or }\|\Delta \boldsymbol{u}_i(k)\|<\epsilon_2, 
\\&\text{ or } \operatorname{sgn}(\hat{\Phi}_{i,pp}(k)) \neq \operatorname{sgn}(\hat{\Phi}_{i,pp}(0)),
\\&\text{ or }\operatorname{sgn}(\hat{\Phi}_{i,pq}(k)) \neq \operatorname{sgn}(\hat{\Phi}_{i,pq}(0)).
\end{aligned}   
\end{equation}
where $\epsilon_1$ and $\epsilon_2$ are small positive constant, and $\hat{\Phi}_{i,pp}(0)$ and $\hat{\Phi}_{i,pq}(0)$ being the initial values of $\hat{\Phi}_{i,pp}(k)$ and $\hat{\Phi}_{i,pq}(k)$, respectively.

To design the distributed data-driven consensus controller, a performance function is designed as $J_{2}[\boldsymbol{u}_i(k)]=\|\hat{\boldsymbol{\chi}}_{i}(k+1)\|^{2}+\mu_i\|\boldsymbol{u}_i(k)-\boldsymbol{u}_i(k-1)\|^{2},$ where $\mu_i>0$ is a weighting factor  constraining the change rate
of PPJMs.
Based on \eqref{eq14}, $\boldsymbol{u}_i(k)$ is designed as
\begin{equation}
\label{eq12}
\begin{aligned}
\boldsymbol{u}_i(k) = &\boldsymbol{u}_i(k-1) 
-\frac{\eta_i\hat{\Phi}_i(k)^{\mathrm{T}}\left(\hat{\boldsymbol{\chi}}_i(k)+l_3 \tilde{\boldsymbol{\chi}}_i(k)\right)}{\mu_i+\|\hat{\Phi}_i(k)\|^2}
\\&-\frac{\eta_i\hat{\Phi}_i(k)^{\mathrm{T}}\left(l_1\hat{\boldsymbol{\delta}}_{i}(k)+l_2\hat{\boldsymbol{d}}_i(k)\right)}{\mu_i+\|\hat{\Phi}_i(k)\|^2}
\end{aligned}
\end{equation}
where $\eta_i\in(0,1]$ is a step factor.

The following two technical Lemmas are needed for the stability analysis.

\begin{lemma}[\cite{bell1965gershgorin} Gersgorin Disk Theorem]
\label{lem:2}
Let $A = [a_{ij}]_{n \times n}$ be a complex matrix and $R_i$ be the sum of the moduli of the off-diagonal elements in the $i$th row. Then, each eigenvalue of $A$ lies in the unions of the circle $
|z - a_{ii}| \leq R_i = \sum_{j=1, j \neq i}^{n} |a_{ij}|, \quad i \in \{1, \dots, n\}.$ 
\end{lemma}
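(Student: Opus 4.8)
The plan is to prove the statement directly from the eigenvalue--eigenvector relation, using the standard device of selecting the entry of an eigenvector with the largest modulus. First I would fix an arbitrary eigenvalue $z$ of $A$ together with an associated nonzero eigenvector $x = (x_1, \dots, x_n)^{\mathrm{T}}$, so that $Ax = zx$. Since $x \neq 0$, I would pick an index $i \in \{1,\dots,n\}$ for which $|x_i| = \max_{1 \leq j \leq n} |x_j| > 0$.

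Next I would write out the $i$th scalar row of $Ax = zx$, namely $\sum_{j=1}^{n} a_{ij} x_j = z x_i$, and isolate the diagonal term to obtain $(z - a_{ii}) x_i = \sum_{j \neq i} a_{ij} x_j$. Taking moduli and applying the triangle inequality gives $|z - a_{ii}|\,|x_i| \leq \sum_{j \neq i} |a_{ij}|\,|x_j|$. Since $|x_j| \leq |x_i|$ for all $j$ by the choice of $i$, the right-hand side is at most $|x_i| \sum_{j \neq i} |a_{ij}| = |x_i|\,R_i$.

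Finally, dividing through by $|x_i| > 0$ yields $|z - a_{ii}| \leq R_i$, so $z$ lies in the $i$th Gersgorin disk, and hence in the union $\bigcup_{i=1}^{n} \{\, w \in \mathbb{C} : |w - a_{ii}| \leq R_i \,\}$. As $z$ was an arbitrary eigenvalue, the claim follows. The one point that needs care -- and it is hardly an obstacle -- is guaranteeing $|x_i| \neq 0$ before the division; this is exactly why the maximizing index over a nonzero vector is chosen, and it is also what makes the bound $|x_j| \leq |x_i|$ usable in the triangle-inequality step. No machinery beyond the triangle inequality is required.
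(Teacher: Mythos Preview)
Your proof is correct and is exactly the classical argument for the Gershgorin disk theorem. The paper itself does not supply a proof of this lemma; it is stated as a known technical result with a citation to \cite{bell1965gershgorin}, so there is no alternative approach in the paper to compare against.
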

\begin{lemma}[\cite{ortega2000iterative}]
\label{lem:3}
Let $A \in \mathbb{R}^{n \times n}$, for any given $\kappa > 0$, there exists an induced consistent matrix norm such that $\|A\|_v \leq s(A) + \kappa,$ where $s(A)$ is the spectral radius of $A$.
\end{lemma}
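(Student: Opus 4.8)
The plan is to reproduce the classical Householder-type argument behind this fact (it is standard, which is why the manuscript simply cites \cite{ortega2000iterative}): one triangularizes $A$ and then rescales so that the off-diagonal part of the triangular factor becomes arbitrarily small, making a maximum-absolute-row-sum norm of the rescaled matrix barely exceed the spectral radius. First I would record the generic mechanism that manufactures induced norms: for any invertible $S$ and any vector norm $\|\cdot\|_{0}$, the map $x \mapsto \|S^{-1}x\|_{0}$ is again a vector norm, call it $\|\cdot\|_{v}$, and the matrix norm it induces is $\|M\|_{v} = \|S^{-1}MS\|_{0}$; such a norm is automatically consistent, that is, submultiplicative and compatible with its vector norm. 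So the task reduces to exhibiting one $S$ for which $\|S^{-1}AS\|_{0}$ lies within $\kappa$ of $s(A)$.

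Next I would triangularize via the (complex) Schur decomposition: there is a unitary $U$ with $U^{*}AU = T$ upper triangular, the diagonal of $T$ being the eigenvalues $\lambda_{1},\dots,\lambda_{n}$ of $A$, so $|\lambda_{i}| \le s(A)$ for every $i$. Then I would introduce the diagonal scaling $D_{\delta} = \operatorname{diag}(1,\delta,\delta^{2},\dots,\delta^{n-1})$ with $\delta > 0$: conjugating $T$ by $D_{\delta}$ leaves the diagonal unchanged and multiplies each entry $t_{ij}$ with $j>i$ by $\delta^{\,j-i}$, so every strictly-upper entry of $D_{\delta}^{-1}TD_{\delta}$ tends to $0$ as $\delta \to 0^{+}$. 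Choosing the base norm $\|\cdot\|_{0} = \|\cdot\|_{\infty}$, whose induced matrix norm is the largest absolute row sum, the $i$th row sum of $D_{\delta}^{-1}TD_{\delta}$ is at most $|\lambda_{i}| + \sum_{j>i}\delta^{\,j-i}|t_{ij}| \le s(A) + \sum_{j>i}\delta^{\,j-i}|t_{ij}|$, and this finite remainder can be forced below $\kappa$ by taking $\delta$ small enough. Setting $S = U D_{\delta}$ then gives $\|A\|_{v} = \|S^{-1}AS\|_{\infty} = \|D_{\delta}^{-1}TD_{\delta}\|_{\infty} \le s(A) + \kappa$, which is the claim.

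The one point I would handle with care, and the only genuine obstacle, is that $U$ is complex even when $A$ is real, so the construction a priori lives on $\mathbb{C}^{n}$; but the restriction of the above vector norm to $\mathbb{R}^{n}$ is still a norm there, and the associated real operator norm is a supremum over a subset of the directions used for the complex operator norm, hence is still at most $s(A) + \kappa$ and still consistent. If a manifestly real construction is preferred, one may instead take $\|x\|_{v} := \sum_{k\ge 0}\|A_{\varepsilon}^{\,k}x\|_{0}$ with $A_{\varepsilon} = A/(s(A)+\varepsilon)$, which converges because $s(A_{\varepsilon}) < 1$ and satisfies $\|A_{\varepsilon}x\|_{v} = \|x\|_{v} - \|x\|_{0} \le \|x\|_{v}$, whence $\|A\|_{v} \le s(A)+\varepsilon$. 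Everything else is routine bookkeeping, which is why the result is used here directly by citation rather than reproved.
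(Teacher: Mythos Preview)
Your argument is correct and is the classical Householder construction; the paper itself offers no proof of this lemma at all, simply citing \cite{ortega2000iterative} and invoking the result directly in the proof of Theorem~\ref{thm:3}. So there is nothing to compare against: you have supplied the standard textbook justification behind the citation, and your handling of the real-versus-complex subtlety (either restricting the Schur-based norm to $\mathbb{R}^{n}$ or using the convergent series $\|x\|_{v}=\sum_{k\ge 0}\|A_{\varepsilon}^{k}x\|_{0}$) is sound.
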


The following theorem guarantees the leader-follower consensus control for MAS.

\begin{theorem}
\label{thm:3}
Given Assumptions \ref{ass:1}-\ref{ass:7}, by using the observer-based data-driven consensus control algorithms from Eq. \eqref{eq6} to Eq. \eqref{eq12}, there exist $\mu_i,\eta_i>0;\;\rho_i, \eta_i\in(0,1];\;l_1,l_2,l_4,l_5,l_7,l_8\in(0,1];\;l_3,l_6\in(1,2]$ such that the attack-resilient distributed consensus problem is solved.    
\end{theorem}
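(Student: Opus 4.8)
\textit{Proof proposal.} The plan is to close the analysis in four stages, mirroring the structure of the algorithm \eqref{eq6}--\eqref{eq12}: (i) bound the PPJM estimate $\hat{\Phi}_i(k)$ and the parameter estimation error $\tilde{\Phi}_i(k)=\hat{\Phi}_i(k)-\Phi_i(k)$; (ii) write the stacked estimation-error dynamics of the observer group and show by the contraction mapping principle that it is bounded; (iii) use the controller \eqref{eq12} to show $\hat{\boldsymbol{\chi}}_i(k)$, hence $\boldsymbol{\chi}_i(k)$, is bounded; (iv) propagate these bounds back to $\mathbb{E}(\|\xi_i(k)\|)$ through the DoS-compensation identity \eqref{eq4} together with Assumptions \ref{ass:6}--\ref{ass:7}. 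For stage (i), I would first note that the resetting rule \eqref{eq11} keeps $\|\hat{\Phi}_i(k)\|\ge\epsilon_1$, keeps $\|\Delta\boldsymbol{u}_i(k)\|\ge\epsilon_2$ whenever the update is active, and freezes the sign pattern; consequently the denominators in \eqref{eq10} and \eqref{eq12} are bounded away from zero, and an induction on $k$ in the spirit of the standard MFAC argument gives $\|\hat{\Phi}_i(k)\|\le\bar{b}_{\hat{\Phi}_i}$. Subtracting $\Phi_i(k)$ from \eqref{eq10}, using Lemma \ref{thm:1} and the PPJM-variation bound furnished by Lemma \ref{lem:1}, the error $\tilde{\Phi}_i(k)$ obeys a recursion whose homogeneous part has norm strictly below one for $\rho_i\in(0,1]$, so $\|\tilde{\Phi}_i(k)\|$ stays inside a ball fixed by $\bar{b}_{\boldsymbol{\theta}_i}$ and the Jacobian variation.

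For stage (ii), define $\tilde{\boldsymbol{\theta}}_i=\hat{\boldsymbol{\theta}}_i-\boldsymbol{\theta}_i$, $\tilde{\boldsymbol{d}}_i$, $\tilde{\boldsymbol{\delta}}_i$ analogously, and form the stacked vector $\boldsymbol{e}_i(k)=[\tilde{\boldsymbol{\chi}}_i(k)^{\mathrm T},\tilde{\boldsymbol{\theta}}_i(k)^{\mathrm T},\tilde{\boldsymbol{d}}_i(k)^{\mathrm T},\tilde{\boldsymbol{\delta}}_i(k)^{\mathrm T}]^{\mathrm T}$ (augmented if needed with $\hat{\boldsymbol{d}}_i,\hat{\boldsymbol{\delta}}_i$, since they enter \eqref{eq6}). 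Subtracting \eqref{eq6}--\eqref{eq9} from the true relations obtained from \eqref{eq5} yields $\boldsymbol{e}_i(k+1)=M_i\,\boldsymbol{e}_i(k)+\boldsymbol{w}_i(k)$, where the constant matrix $M_i$ has entries built from $l_1,\dots,l_8$ together with the $\pm1$ shift terms from the error bookkeeping, and $\boldsymbol{w}_i(k)$ collects the already-bounded forcing terms $\tilde{\Phi}_i(k)\Delta\boldsymbol{u}_i(k)$, $\Delta\boldsymbol{\theta}_i(k)$, $\Delta\boldsymbol{d}_i(k)$, $\Delta\boldsymbol{\delta}_i(k)$. Applying Lemma \ref{lem:2} to locate the eigenvalues of $M_i$ strictly inside the unit disk for the stated ranges $l_1,l_2,l_4,l_5,l_7,l_8\in(0,1]$ and $l_3,l_6\in(1,2]$, and then Lemma \ref{lem:3} to select an induced consistent norm with $\|M_i\|_v<1$, the affine map $\boldsymbol{e}_i\mapsto M_i\boldsymbol{e}_i+\boldsymbol{w}_i$ is a contraction, so $\|\boldsymbol{e}_i(k)\|$, in particular $\|\tilde{\boldsymbol{\chi}}_i(k)\|$, converges to a bound determined by $\sup_k\|\boldsymbol{w}_i(k)\|$.

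For stages (iii)--(iv): substituting \eqref{eq12} into \eqref{eq6} and using $\big\|\hat{\Phi}_i(k)\hat{\Phi}_i(k)^{\mathrm T}/(\mu_i+\|\hat{\Phi}_i(k)\|^2)\big\|<1$, the map $\hat{\boldsymbol{\chi}}_i(k)\mapsto\hat{\boldsymbol{\chi}}_i(k+1)$ is a contraction up to the bounded terms $\tilde{\boldsymbol{\chi}}_i$, $\hat{\boldsymbol{d}}_i$, $\hat{\boldsymbol{\delta}}_i$, $\hat{\boldsymbol{\theta}}_i$ from stages (i)--(ii); hence $\|\hat{\boldsymbol{\chi}}_i(k)\|$ and therefore $\|\boldsymbol{\chi}_i(k)\|=\|\hat{\boldsymbol{\chi}}_i(k)+\tilde{\boldsymbol{\chi}}_i(k)\|$ are bounded. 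Finally, from \eqref{eq4}, at a DoS-free step $\boldsymbol{\chi}_i(k)=\xi_i(k)$, while at a jammed step $\xi_i(k)$ is held at its last clean value $\xi_i(k-1)$; iterating \eqref{eq4} back to the most recent attack-free instant and invoking the DoS frequency bound (Assumption \ref{ass:6}) and duration bound (Assumption \ref{ass:7}) to control the number and length of consecutive corrupted steps, taking expectation over the attack pattern yields $\lim_{k\to\infty}\mathbb{E}(\|\xi_i(k)\|)\le\bar{b}_{\mathbb{E}(\|\xi_i(k)\|)}$, which is precisely the claim.

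I expect the main obstacle to be stage (ii): verifying that the prescribed gain ranges actually force $\rho(M_i)<1$ through the Gersgorin disks when the $l_3,l_6\in(1,2]$ damping terms are mixed with the unit-shift integrators of \eqref{eq7}--\eqref{eq9}, while simultaneously guaranteeing that the forcing $\boldsymbol{w}_i(k)$ stays bounded — the latter needs $\Delta\boldsymbol{u}_i(k)$ bounded, which couples back to stage (iii). Resolving this without circular reasoning requires a careful ordering: bound $\hat{\Phi}_i$ and $\tilde{\Phi}_i$ first, then treat the controller recursion \eqref{eq12} and the observer-error recursion jointly as one contraction on the product space, so that boundedness of $\boldsymbol{u}_i$, $\boldsymbol{e}_i$, and $\hat{\boldsymbol{\chi}}_i$ is established simultaneously rather than sequentially.
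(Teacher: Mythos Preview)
Your stages (i) and (iii) coincide with the paper's Part-II and the second half of its Part-III: the $\tilde{\Phi}_i$ contraction via \eqref{eq10} and the $\hat{\boldsymbol{\chi}}_i$ contraction obtained by substituting \eqref{eq12} into \eqref{eq6} are handled identically, and the Gersgorin/Lemma~\ref{lem:3} machinery is applied in the paper exactly to the matrix $\gamma_i(k)=I-\eta_i\hat{\Phi}_i\hat{\Phi}_i^{\mathrm T}/(\mu_i+\|\hat{\Phi}_i\|^2)$ that you isolate in stage (iii). Where you diverge is stage (ii). The paper does \emph{not} stack the observer errors into a vector $\boldsymbol{e}_i$ or study any matrix $M_i$; instead it inserts a short Part-I arguing, by one-step induction from bounded initial data, that all internal observer states $\hat{\boldsymbol{\theta}}_i,\hat{\boldsymbol{d}}_i,\hat{\boldsymbol{\delta}}_i$ (and $\boldsymbol{u}_i,\boldsymbol{y}_i$) remain bounded on any finite horizon, and then analyses $\tilde{\boldsymbol{\chi}}_i$ in isolation via the scalar factor $|1-l_3|<1$ coming from \eqref{eq16}, lumping everything else into a bounded forcing $\psi_1$. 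This sidesteps precisely the obstacle you anticipate---the integrators in \eqref{eq7}--\eqref{eq9} have diagonal entry $1$, so Gersgorin disks for $M_i$ are centred at $1$ and $\rho(M_i)<1$ cannot be read off the row sums---at the cost of a softer boundedness argument and of the gains $l_4,\dots,l_8$ never actually entering the stability estimates. Your joint-contraction idea is more principled and would resolve the circularity you worry about, but it is genuinely harder to execute for these particular observer equations; the paper simply avoids it.

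Your stage (iv) is likewise absent: the paper stops once $\mathbb{E}(\|\boldsymbol{\chi}_i(k)\|)$ is bounded and declares $\boldsymbol{\chi}_i$ to be the consensus error, without propagating back to $\xi_i$ through \eqref{eq4} and Assumptions~\ref{ass:6}--\ref{ass:7}. Your extra step is the more faithful reading of the Problem statement.
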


\begin{proof}
The proof consists of three parts. In the Part-I, the boundedness of variables, $\boldsymbol{u}_i(k)$, $\boldsymbol{y}_i(k)$, $\hat{\Phi}_i(k)$, $\hat{\boldsymbol{\chi}}_i(k)$, $\hat{\boldsymbol{\theta}}_i(k)$, $\hat{\boldsymbol{d}}_i(k)$, and $\hat{\boldsymbol{\delta}}_{i}(k)$ for $k \in[0, T]$ where $T$ is a finite time instant is illustrated. In Part-II, the boundedness of the tracking error $\tilde{\Phi}_{i}(k)$ is illustrated. In Part-III, the convergence analysis of the $\tilde{\boldsymbol{\chi}}_i(t)$ is illustrated. Then the convergence analysis of the $\hat{\boldsymbol{\chi}}_i(t)$ is given, and hence, the convergence of $\boldsymbol{\chi}_i(t)$ is proved.

Part-I: When $k=0$, the initial values $\|\boldsymbol{u}_i(0)\|<\bar{b}_{\boldsymbol{u}_i(0)},\|\boldsymbol{y}_i(0)\|<\bar{b}_{\boldsymbol{y}_i(0)}$, $\|\hat{\Phi}_i(0)\|<\bar{b}_{\hat{\Phi}_i(0)},\|\hat{\boldsymbol{\chi}}_i(0)\|<\bar{b}_{\hat{\boldsymbol{\chi}}_i(0)}$, $\|\hat{\boldsymbol{\theta}}_i(0)\|<\bar{b}_{\hat{\boldsymbol{\theta}}_i(0)}$, $\|\hat{\boldsymbol{d}}_i(0)\|<\bar{b}_{\hat{\boldsymbol{d}}_i(0)}$, and $\|\hat{\boldsymbol{\delta}}_{i}(0)\|<\bar{b}_{\hat{\boldsymbol{\delta}}_{i}(0)}$ are all given bounded. Then based on the proposed method from Eq. \eqref{eq6} to \eqref{eq12}, one has $\boldsymbol{u}_i(1)$, $\boldsymbol{y}_i(1)$, $\hat{\Phi}_i(1)$, $\hat{\boldsymbol{\chi}}_i(1)$, $\hat{\boldsymbol{\theta}}_i(1)$, $\hat{\boldsymbol{d}}_i(1)$, and $\hat{\boldsymbol{\delta}}_{i}(1)$ are all bounded, and hence one has $\|\boldsymbol{u}_i(1)\|<\bar{b}_{\boldsymbol{u}_i(1)},\|\boldsymbol{y}_i(1)\|<\bar{b}_{\boldsymbol{y}_i(1)}$, $\|\hat{\Phi}_i(1)\|<\bar{b}_{\hat{\Phi}_i(1)},\|\hat{\boldsymbol{\chi}}_i(1)\|<\bar{b}_{\hat{\boldsymbol{\chi}}_i(1)}$, $\|\hat{\boldsymbol{\theta}}_i(1)\|<\bar{b}_{\hat{\boldsymbol{\theta}}_i(1)}$, $\|\hat{\boldsymbol{d}}_i(1)\|<\bar{b}_{\hat{\boldsymbol{d}}_i(1)}$, and $\|\hat{\boldsymbol{\delta}}_{i}(1)\|<\bar{b}_{\hat{\boldsymbol{\delta}}_{i}(1)}$. Consider the boundedness of system variables at any finite time instant $k=T>1$. Following the same steps from $k=0$ to $k=1$, one has $\|\boldsymbol{u}_i(T)\|<\bar{b}_{\boldsymbol{u}_i(T)},\|\boldsymbol{y}_i(T)\|<\bar{b}_{\boldsymbol{y}_i(T)}$, $\|\hat{\Phi}_i(T)\|<\bar{b}_{\hat{\Phi}_i(T)},\|\hat{\boldsymbol{\chi}}_i(T)\|<\bar{b}_{\hat{\boldsymbol{\chi}}_i(T)}$, $\|\hat{\boldsymbol{\theta}}_i(T)\|<\bar{b}_{\hat{\boldsymbol{\theta}}_i(T)}$, $\|\hat{\boldsymbol{d}}_i(T)\|<\bar{b}_{\hat{\boldsymbol{d}}_i(T)}$, and $\|\hat{\boldsymbol{\delta}}_{i}(T)\|<\bar{b}_{\hat{\boldsymbol{\delta}}_{i}(T)}$. Based on the definition of $\boldsymbol{\theta}(k), \max _{k \in[0, T]}|\boldsymbol{\theta}(k)|<\bar{b}_{\boldsymbol{\theta}}$ can be obtained since the boundedness of $u(k)$ and $y(k)$ has been guaranteed for $k \in[0, T]$.

Part-II: Define the estimation error of the PPJM parameter as $\tilde{\Phi}_{i}(k)=\hat{\Phi}_{i}(k)-\Phi_{i}(k)$. Subtracting $\Phi_{i}(k)$ of \eqref{eq10}, one has
\begin{equation}
\label{eq13}
\begin{aligned}
\tilde{\Phi}_i(k)=&\hat{\Phi}_i(k-1) -\Phi_{i}(k)
\\&+\frac{\rho_i  \Phi_i(k-1) \Delta \boldsymbol{u}_i(k-1)\Delta \boldsymbol{u}_i^{\mathrm{T}}(k-1)}{\lambda_i+\|\Delta \boldsymbol{u}_i(k-1)\|^2}
\\&-\frac{\rho_i \hat{\Phi}_i(k-1) \Delta \boldsymbol{u}_i(k-1)\Delta \boldsymbol{u}_i^{\mathrm{T}}(k-1)}{\lambda_i+\|\Delta \boldsymbol{u}_i(k-1)\|^2} 
\\= & \tilde{\Phi}_{i}(k-1)\left(I-\frac{\rho_i  \Delta \boldsymbol{u}_i(k-1)\Delta \boldsymbol{u}_i^{\mathrm{T}}(k-1)}{\lambda_i+\|\Delta \boldsymbol{u}_i(k-1)\|^2}\right)\\&+\Phi_{i}(k-1)-\Phi_{i}(k)
\end{aligned}    
\end{equation}

Taking the norm and the expectation for both sides of \eqref{eq13}, one has
\begin{flalign}
\label{eq14}
\begin{aligned}
\mathbb{E}\left(\left\|\tilde{\Phi}_i(k)\right\|\right)\leq  
&\left\|I-\frac{\rho_i  \Delta \boldsymbol{u}_i(k-1)\Delta \boldsymbol{u}_i^{\mathrm{T}}(k-1)}{\lambda_i+\|\Delta \boldsymbol{u}_i(k-1)\|^2}\right\|
\\&\times\mathbb{E}\left(\left\|\tilde{\Phi}_i(k-1)\right\|\right)+\left\|\Phi_{i}(k-1)-\Phi_{i}(k)\right\|
\end{aligned}&&\raisetag{2\baselineskip}
\end{flalign}

Notice that $\rho_i \Delta \boldsymbol{u}_i(k-1)\Delta \boldsymbol{u}_i^{\mathrm{T}}(k-1)/(\lambda_i+\|\Delta \boldsymbol{u}_i(k-1)\|^2)$ is monotonically increasing for the variable $\|\Delta \boldsymbol{u}_i(k-1)\|^2$, and its minimum value is $\rho_i \epsilon_2^{2} /\left(\lambda_i+\epsilon_2^{2}\right)$ according to \eqref{eq11}. Then, when $\rho_i \in(0,1]$ and $\lambda_i>0$, the following inequality holds $\left\|I-\frac{\rho_i  \Delta \boldsymbol{u}_i(k-1)\Delta \boldsymbol{u}_i^{\mathrm{T}}(k-1)}{\lambda_i+\|\Delta \boldsymbol{u}_i(k-1)\|^2}\right\|\leq \left\|I-\frac{\rho_i \epsilon_2^{2}}{\lambda_i+\epsilon_2^{2}}\right\|\triangleq \varepsilon_1$, where $\varepsilon_1<1$. According to Theorem \ref{thm:2}, one has $\left\|\Phi_{i}(k)\right\| \leq \Phi_i$. Thus, $\left\|\Phi_{i}(k)-\Phi_{i}(k-1)\right\| \leq 2 \Phi_i$. 
Hence, based on the above analysis, using \eqref{eq14}, one has
\begin{equation}
\label{eq15}
\begin{aligned}
\mathbb{E}\left\{\left\|\tilde{\Phi}_{i}(k)\right\|\right\} & \leq \varepsilon_1 \mathbb{E}\left\{\left\|\tilde{\Phi}_{i}(k-1)\right\|\right\}+2 \Phi_i\\
& \leq \varepsilon_1^{2} \mathbb{E}\left\{\left\|\tilde{\Phi}_{i}(k-2)\right\|\right\}+2\varepsilon_1 \Phi_i+2 \Phi_i\\
& \leq \cdots \\
& \leq \varepsilon_1^{k-1} \mathbb{E}\left\{\left\|\tilde{\Phi}_{i}(1)\right\|\right\}+\frac{2 \Phi_i\left(1-\varepsilon_1^{k-1}\right)}{1-\varepsilon_1}
\end{aligned}
\end{equation}which implies that $\tilde{\Phi}_{i}(k)$ is bounded in the mean square sense. Due to the fact that $\Phi_{i}(k)$ is bounded, $\hat{\Phi}_{i}(k)$ also is bounded.

Part-III: Define the estimation error of the observer as $\tilde{\boldsymbol{\chi}}_{i}(k)=\boldsymbol{\chi}_{i}(k)-\hat{\boldsymbol{\chi}}_{i}(k)$, the dynamics of observer error can be obtained by making a difference from \eqref{eq5} and \eqref{eq6}
\begin{equation}
\label{eq16}
\begin{aligned}
\tilde{\boldsymbol{\chi}}_i(k+1)= &(1-l_3)\tilde{\boldsymbol{\chi}}_i(k)-\left(\tilde{\Phi}_i(k)\Delta \boldsymbol{u}_i(k)
+ \tilde{\boldsymbol{\theta}}_i(k)\right)\\&-l_1\hat{\boldsymbol{\delta}}_{i}(k)-l_2\hat{\boldsymbol{d}}_i(k) 
\end{aligned}
\end{equation}

Since the boundedness of $\Delta \boldsymbol{u}_i(k)$, $\tilde{\Phi}_{i}(k)$, $\tilde{\boldsymbol{\theta}}_i(k)$, $\hat{\boldsymbol{d}}_i(k)$, and $\hat{\boldsymbol{\delta}}_{i}(k)$ have been proved, one has $\mathbb{E}\left(\left\|\left(\tilde{\Phi}_i(k)\Delta \boldsymbol{u}_i(k)
+ \tilde{\boldsymbol{\theta}}_i(k)\right)-l_1\hat{\boldsymbol{\delta}}_{i}(k)-l_2\hat{\boldsymbol{d}}_i(k)\right\|\right)\leq\psi_1$. By selecting observer gain $l_3\in(0,2)$, one has
\begin{flalign}
\label{eq17}
\begin{aligned}
\mathbb{E}\left(\left\|\tilde{\boldsymbol{\chi}}_i(k+1)\right\|\right)\leq& (1-l_3)\mathbb{E}\|\tilde{\boldsymbol{\chi}}_i(k)\|+\psi_1 \\
 \leq &(1-l_3)^{2}\mathbb{E}\|\tilde{\boldsymbol{\chi}}_i(k-1)\|+(1-l_3) \psi_1+\psi_1 
\\ \leq &\cdots  
\\ \leq &(1-l_3)^{k}\|\tilde{\boldsymbol{\chi}}_i(1)\|+\frac{\psi_1\left[1-(1-l_3)^{k-1}\right]}{1-(1-l_3)} .
\end{aligned}&&\raisetag{2.7\baselineskip}
\end{flalign}

Hence, the bounded convergence of the observer error $\tilde{\boldsymbol{\chi}}_i(k)$ in the mean square sense is bounded.

Substitute \eqref{eq12} into \eqref{eq6}, one has
\begin{flalign}
\label{eq18}
\begin{aligned}
\hat{\boldsymbol{\chi}}_i(k+1)
= &\hat{\boldsymbol{\chi}}_i(k)
-\frac{\eta_i\hat{\Phi}_i(k)\hat{\Phi}_i(k)^{\mathrm{T}}\left(\hat{\boldsymbol{\chi}}_i(k)+l_3 \tilde{\boldsymbol{\chi}}_i(k)\right)}{\mu_i+\|\hat{\Phi}_i(k)\|^2}
\\&-\frac{\eta_i\hat{\Phi}_i(k)\hat{\Phi}_i(k)^{\mathrm{T}}\left(l_1\hat{\boldsymbol{\delta}}_{i}(k)+l_2\hat{\boldsymbol{d}}_i(k)\right)}{\mu_i+\|\hat{\Phi}_i(k)\|^2}
\\&+ l_1\hat{\boldsymbol{\theta}}_i(k)+l_1\hat{\boldsymbol{\delta}}_{i}(k)+l_2\hat{\boldsymbol{d}}_i(k)+l_3 \tilde{\boldsymbol{\chi}}_i(k)
\\=&\gamma_i(k)\hat{\boldsymbol{\chi}}_i(k)+\omega_i(k)
\end{aligned}&&\raisetag{3\baselineskip}
\end{flalign}
where $\gamma_i(k)=I-\eta_i\hat{\Phi}_i(k)\hat{\Phi}_i(k)^{\mathrm{T}}/(\mu_i+\|\hat{\Phi}_i(k)\|^2)$ and $\omega_i(k)=(I-\eta_i\hat{\Phi}_i(k)\hat{\Phi}_i(k)^{\mathrm{T}}/(\mu_i+\|\hat{\Phi}_i(k)\|^2))(l_1\hat{\boldsymbol{\delta}}_{i}(k)+l_2\hat{\boldsymbol{d}}_i(k)+l_3\tilde{\boldsymbol{\chi}}_i(k)).$
Taking the norm and the expectation for both sides of \eqref{eq18}, and using Lemma \ref{lem:3}, one has 
\begin{equation}
\label{eq19}
\begin{aligned}
\mathbb{E}\left(\left\|\hat{\boldsymbol{\chi}}_{i}(k+1)\right\|\right) \leq&\left\|\gamma_i(k)\right\|\mathbb{E}\left(\left\|\hat{\boldsymbol{\chi}}_{i}(k)\right\|\right)+\left\|\omega_i(k)\right\| \\\leq&\left\|\gamma_i(k)\right\|_{v}\left\|\hat{\boldsymbol{\chi}}_{i}(k)\right\|+\left\|\omega_i(k)\right\|     
\end{aligned}
\end{equation}
where the norm $\|\cdot\|_{v}$ is selected as a proper induced norm.

According to Lemma \ref{lem:2}, the Gershgorin disk of $\gamma_i(k)$ is \begin{equation}
\label{eq20}
\begin{aligned}
D_{i, k}=\left\{z_{i}|\left|z_{i}-\left|1-\alpha_i(k)\right|\right|\leq\left|\beta_i\right|\right\}, k=\left\{1, \ldots, n\right\}
\end{aligned}    
\end{equation}
where $z_{i}$ is the characteristic root of $\gamma_i(k)$, $\alpha_i(k)=(\rho_i \sum_{k=1}^{n} \hat{\Phi}_{i, p k}(k) \hat{\Phi}_{i, p k}^{\mathrm{T}}(k))/(\mu_i+\|\hat{\Phi}_i(k)\|^2)$, and $\beta_i(k)=(\sum_{l=1, l \neq p}^{n} \rho_i \sum_{k=1}^{n} \hat{\Phi}_{i, p k}(k) \hat{\Phi}_{i, l p}^{\mathrm{T}}(k)) / (\mu_i+\|\hat{\Phi}_i(k)\|^2)$. Using the triangle inequality, \eqref{eq20} is rewritten as $D_{i, k}=\left\{z_{i}||z_{i}|\leq\left|1-\alpha_i(k)\right|+\left|\beta_i\right| \right\}.$ From the proof of Theorem 2 in \cite{hou2019model}, the eigenvalue of $\gamma_i(k)$ satisfies $\left|z_{i}\right|_{\gamma_i(k)}<1$ when $\left\|\Phi_{i}(k)\right\| \leq \Phi_i,\left\|\hat{\Phi}_{i}(k)\right\| \leq \hat{\Phi}_{i}$. There exists a positive constant $\Lambda$ such that $s\left[\gamma_i(k)\right]<1-\rho_i \Lambda<1,$ where $s\left(\gamma_i(k)\right)=\max \left|z_{i}\right|$ is the spectral radius of matrix $\gamma_i(k)$. Based on Lemma \ref{lem:3}, there exists a constant $\kappa>0$ such that\begin{equation}
\label{eq21}
\left\|\gamma_i(k)\right\|_{v} \leq s\left(\gamma_i(k)\right)+\kappa=1-\rho_i \Lambda+\kappa=\varepsilon_2<1 
\end{equation}

Based on previous discussion, the boundedness of $\tilde{\boldsymbol{\chi}}_{i}(k), \Phi_{i}(k-1), \hat{\Phi}_{i}(k), \hat{\boldsymbol{\delta}}_{i}(k), $ and $\hat{\boldsymbol{d}}_{i}(k)$ was guaranteed. The controller cannot change too quickly in practical application \cite{xu2014novel}; hence $\Delta \boldsymbol{u}_i(k-1)$ is bounded. In addition, a bounded change of the input can produce a bounded change of the output in practical. Thus, the boundedness of $\Delta \boldsymbol{\chi}_i(k-1)$ is guaranteed. By the above analysis, the boundedness of $\Delta \boldsymbol{\chi}_i(k)$ is guaranteed. Since $\hat{\boldsymbol{\delta}}_{i}(k)$ and $\tilde{d}_{i}(k)$ are bounded, the boundedness of $\hat{\boldsymbol{d}}_{i}(k)$ is guaranteed. Hence, there exist positive constants $\psi_1$, $\psi_2$, $\psi_3$, and $\psi_4$ such that $\mathbb{E}\left(\left\|l_1\left(\hat{\Phi}_i(k)\Delta \boldsymbol{u}_i(k)
+ \hat{\boldsymbol{\theta}}_i(k)\right)\right\|\right) \leq \psi_1,$ $\mathbb{E}\left(\left\|\hat{\Phi}_{i, 1}(k) \gamma_i(k) \Delta \boldsymbol{\chi}_{i}(k)\right\|\right) \leq \psi_2,$ $\mathbb{E}\left(\left\|\hat{\boldsymbol{d}}_{i}(k)\right\|\right) \leq \psi_3$, and $\mathbb{E}\left(\left\|\hat{\boldsymbol{\delta}}_{i}(k)\right\|\right) \leq \psi_4.$ Thus, there exists a positive constant $\varpi_{i}>0$ satisfying $\mathbb{E}\left(\left\|\omega_i(k)\right\|\right) \leq \varpi_{i}$. Then based on \eqref{eq21}, by using \eqref{eq19}, one has
\begin{equation}
\label{eq22}
\begin{aligned}
\mathbb{E}\left(\left\|\hat{\boldsymbol{\chi}}_i(k+1)\right\|\right)\leq& \varepsilon_2\mathbb{E}\|\hat{\boldsymbol{\chi}}_i(k)\|+\varpi_{i} \\
 \leq &\varepsilon_2^{2}\mathbb{E}\|\hat{\boldsymbol{\chi}}_i(k-1)\|+\varepsilon_2 \varpi_{i}+\varpi_{i} 
\\ \leq &\cdots  
\\ \leq &\varepsilon_2^{k}\|\hat{\boldsymbol{\chi}}_i(1)\|+\frac{\varpi_{i}\left[1-\varepsilon_2^{k-1}\right]}{1-\varepsilon_2} .
\end{aligned}
\end{equation}

According to \eqref{eq17} and \eqref{eq22}, $\hat{\boldsymbol{\chi}}_{i}(k)$ and $\tilde{\boldsymbol{\chi}}_{i}(k)$ are all bounded in the mean square sense. Therefore, the consensus error $\boldsymbol{\chi}_i(k)$ is also bounded in the mean square sense, that is, $\boldsymbol{\chi}_i(k)$ converges to the following set $\left\{\boldsymbol{\chi}_i(k)\left|\mathbb{E}\left[\left(\left\|\boldsymbol{\chi}_i(k)\right\|\right)\right] \leq\mathbb{E}\left[\left(\left\|\hat{\boldsymbol{\chi}}_{i}(k)\right\|+\left\|\tilde{\boldsymbol{\chi}}_{i}(k)\right\|\right)\right] \leq \Upsilon_i\right.\right\}$, where $\Upsilon_i>0$ is the upper bound of the consensus error $\boldsymbol{\chi}_i(k)$. 
The proof is completed.
\end{proof}
\section{Simulation Results}
\label{sec:Simulation Results}
In this section, two simulation case studies are performed to demonstrate
the efficacy of the proposed control algorithms.
\subsection{Leaderless Consensus Control}
In this case, taking $n_{\boldsymbol{y}_i}=n_{\boldsymbol{u}_i}=2$, a MAS is considered as $y_{i,1}(k+1) = \frac{y_{i,1}(k) u_{i,1}(k)}{1 + y_{i,1}(k)^{w_{i,1}}} + w_{i,2} u_{i}(k)$ and $y_{i,2}(k+1) = \frac{y_{i,2}(k) u_{i,2}(k)}{1 + y_{i,1}(k)^{w_{i,3}}+y_{i,2}(k)^{w_{i,4}}} + w_{i,5} u_{i}(k),$ where $i \in \left\{1, 2, 3\right\}$; and $k\in(0, T]$ where $T=1500$; and $w_{i,1}$ are $  w_{i,2}$ are positive constants. The communication graph is considered as in Fig. \ref{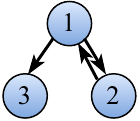}. The FDI attacks are considered as
$\boldsymbol{\delta}_{i,1}(k, \boldsymbol{y}_i(k))= 0.5 \cos\left(5\pi k / T\right) \sin(y_{i,1}(k))+ 0.5 \cos\left(4\pi k / T\right) \sin(y_{i,1}(k)) \cos(y_{i,2}(k))$ and $\boldsymbol{\delta}_{i,2}(k, \boldsymbol{y}_i(k))= 0.5 \cos\left(5\pi k / T \right) \cos(y_{i,1}(k))
+ 0.5 \sin\left(2\pi k / T \right) \sin(y_{i,1}(k)) \cos(y_{i,2}(k))$. The FDI attack signals are shown in Fig. \ref{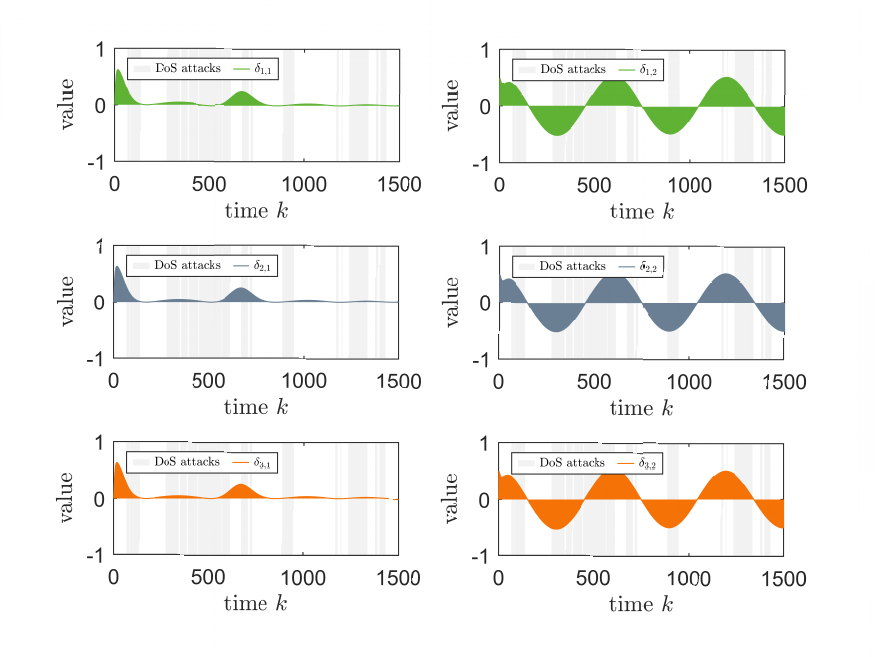}.
\vspace{-18pt} 
\begin{figure}[H]
    \centering
    \begin{subfigure}[b]{0.06\textwidth}
        \includegraphics[width=\textwidth]{figures/communicationgraph1.pdf}
        \caption{}
        \label{figures/communicationgraph1.pdf}
    \end{subfigure}
    \hspace{0.04\textwidth} 
    \begin{subfigure}[b]{0.1\textwidth}
        \includegraphics[width=\textwidth]{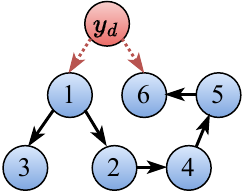}
        \caption{}
        \label{figures/communicationgraph2.pdf}
    \end{subfigure}
    \vspace{-5pt} \caption{Communication graph. (a) Leaderless consensus control scenario; (b) Leader-follower tracking control scenario.}
    \label{fig:scope}
\end{figure}
\vspace{-15pt} 
\begin{figure}[H]
\centering
\includegraphics[width=0.41\textwidth]{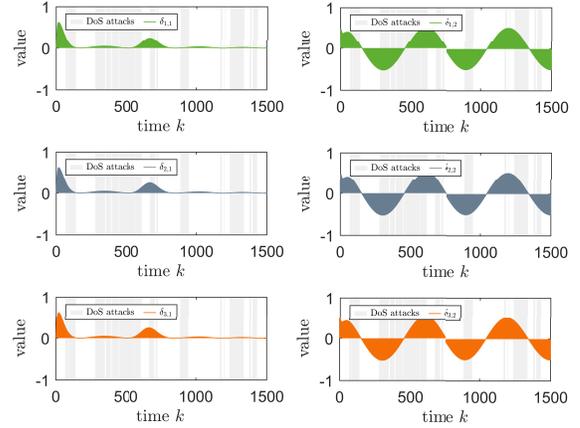}
\caption{FDI and DoS attacks.}
\label{figures/scenario1_FDIA.pdf}
\end{figure}
\vspace{-11pt} 
The disturbances are considered as $d_{i,1} = 0.1 \cos(2\pi k/T)
$ and $d_{i,2} = 0.1 \sin(2\pi k/T)$. The control parameters are selected as \(\eta_{i}=0.1\), \(\rho_{i}=0.1\), \(\lambda_i=1\), and \(\mu_i=1\).  The simulation results are depicted in Fig. \ref{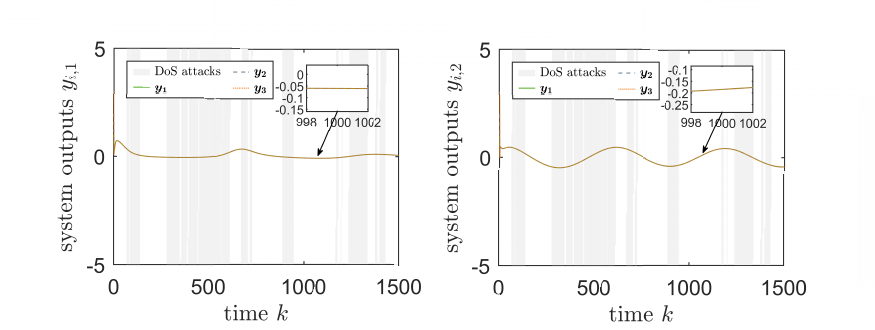}, which indicates that the followers, under the proposed framework, achieve resilient performance with an immediate response speed, regardless of FDI and DoS attacks.
\vspace{-5pt} 
\begin{figure}[H]
    \centering
    \begin{subfigure}[b]{0.41\textwidth}
        \centering
        \includegraphics[width=\textwidth]{figures/scenario1_y_performance.pdf}

        \label{fig:scenario1_y_performance}
    \end{subfigure}
    \begin{subfigure}[b]{0.41\textwidth}
        \centering
        \includegraphics[width=\textwidth]{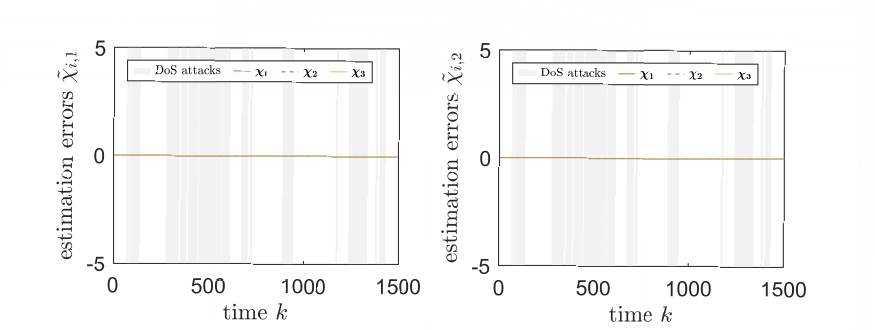}

        \label{fig:scenario1_estimation_errors}
    \end{subfigure}
    \vspace{-5pt}
    \caption{Performance of $y_{i}$ (top); and Estimation errors $\tilde{\boldsymbol{\chi}}_i$ using the proposed method under attacks (bottom).}
    \label{figures/scenario1_y_performance.pdf}
\end{figure}
\subsection{Leader-follower Tracking Control}
In this case, taking $n_{\boldsymbol{y}_i}=n_{\boldsymbol{u}_i}=2$, a MAS is considered as $y_{i,1}(k+1) = \frac{y_{i,1}(k) u_{i,1}(k)}{1 + y_{i,1}(k)^{w_{i,1}}} + w_{i,2} u_{i}(k)$ and $y_{i,2}(k+1) = \frac{y_{i,2}(k) u_{i,2}(k)}{1 + y_{i,1}(k)^{w_{i,3}}+y_{i,2}(k)^{w_{i,4}}} + w_{i,5} u_{i}(k),$ where $i \in \left\{1, \cdots, 6\right\}$; and $k\in(0, T]$ where $T=1500$; and $w_{i,1}$ are $  w_{i,2}$ are positive constants. The desired \(\boldsymbol{y}_0(k)\) is considered as $\boldsymbol{y}_0(k) =\begin{bmatrix} 5 & 2 \end{bmatrix}^{\mathrm{T}},0 \leq k < 500;\begin{bmatrix} 2 & 4 \end{bmatrix}^{\mathrm{T}},500 \leq k < 1000;\begin{bmatrix} 4 & 3 \end{bmatrix}^{\mathrm{T}},1000 \leq k < 1500.$ The communication graph is considered as in Fig. \ref{figures/communicationgraph2.pdf}.
The FDI attacks are considered as
$\boldsymbol{\delta}_{i,1}(k, \boldsymbol{y}_i(k))= 0.5 \cos\left(5\pi k / T\right) \sin(y_{i,1}(k))+ 0.5 \cos\left(4\pi k / T\right) \sin(y_{i,1}(k)) \cos(y_{i,2}(k))$ and $\boldsymbol{\delta}_{i,2}(k, \boldsymbol{y}_i(k))= 0.5 \cos\left(5\pi k / T \right) \cos(y_{i,1}(k))
+ 0.5 \sin\left(2\pi k / T \right) \sin(y_{i,1}(k)) \cos(y_{i,2}(k))$. The FDI attack signals are shown in Fig. \ref{figures/scenario2_FDI.pdf}.  

\begin{figure}[H]
\centering

\includegraphics[width=0.42\textwidth]{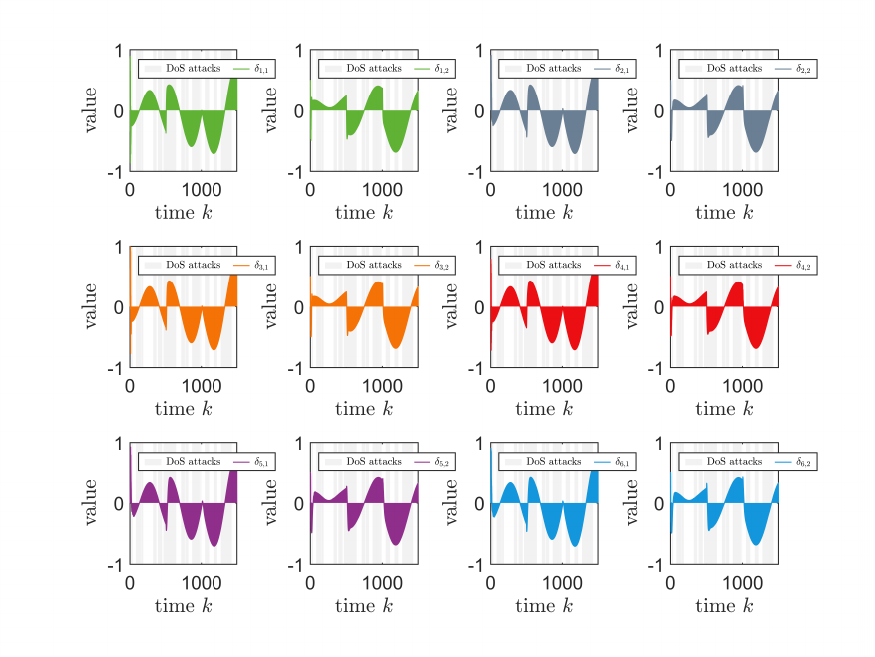}
\caption{FDI and DoS attacks.
}
\label{figures/scenario2_FDI.pdf}
\end{figure}

The disturbances are considered as $d_{i,1} = 0.1 \cos(2\pi k/T)
$ and $d_{i,2} = 0.1 \sin(2\pi k/T)$. The parameters are selected as \(\eta_{i}=0.1\), \(\rho_{i}=0.1\), \(\lambda_i=1\), and \(\mu_i=1\). The simulation results are shown in Fig. \ref{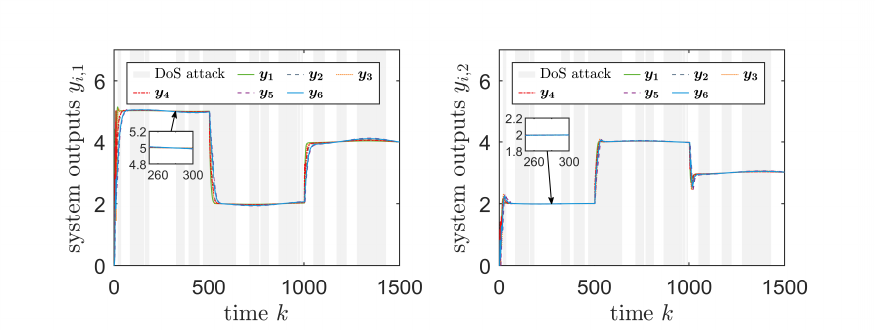}. It is evident that the followers can track the leader’s signals within a small boundary, and the tracking errors of the distributed outputs also remain within a small boundary. This indicates that bounded tracking errors are guaranteed even under FDI and DoS attacks.

\begin{figure}[H]
    \centering
    \begin{subfigure}[b]{0.42\textwidth}
        \centering
        \includegraphics[width=\textwidth]{figures/scenario2_y_performance.pdf}

        \label{fig:scenario2_y_performance}
    \end{subfigure}
    \begin{subfigure}[b]{0.42\textwidth}
        \centering
        \includegraphics[width=\textwidth]{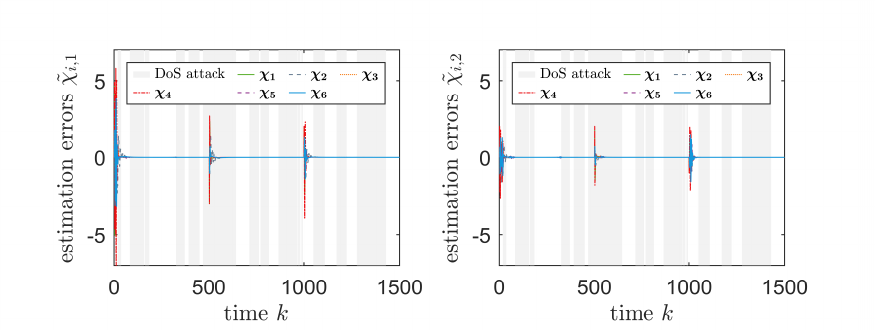}

        \label{fig:scenario2_estimation_errors}
    \end{subfigure}
    \caption{Performance of $y_{i}$ (top); and Estimation errors $\tilde{\boldsymbol{\chi}}_i$ using the proposed method under attacks (bottom).}
    \label{figures/scenario2_y_performance.pdf}
\end{figure}
The simulation results using the conventional MFAC method under attacks are depicted in Fig. \ref{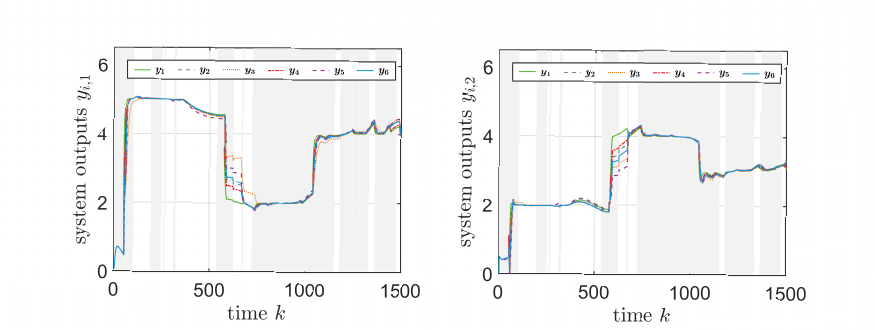}, demonstrating that the proposed method achieves significantly better performance under FDI and DoS attacks compared to the conventional MFAC method.
\begin{figure}[H]
\centering
\includegraphics[width=0.42\textwidth]{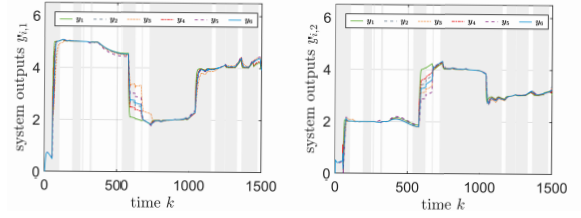}
\caption{Performance of $y_{i}$ using the conventional MFAC.}
\label{figures/pure_MFAC_DoS_FDI.pdf}
\end{figure}

\section{CONCLUSION}
\label{sec:Conclution}

This letter has investigated the attack-resilient consensus control problem under FDI and DoS attacks. A novel attack-resilient observer-based MFAC scheme has been designed to estimate FDI attacks, external disturbances, and lumped disturbances, as well as to compensate for DoS attacks. A rigorous stability analysis has been provided to ensure the boundedness of the distributed neighborhood estimation consensus error. The effectiveness of the proposed approach has been validated through numerical examples involving both leaderless consensus and leader-follower consensus. Compared to the conventional MFAC scheme, it has demonstrated significantly improved attack-resilient performance over existing data-driven control approaches.

\vfill
\bibliographystyle{IEEEtran}
\bibliography{bare_jrnl_new_sample4}
\end{document}